\newtheorem{thm}{Theorem}
\begin{document}

\title{Exact Solution for the Heat Conductance in Harmonic Chains}

\begin{abstract}
We present an exact solution for the heat conductance along a harmonic chain connecting two reservoirs at different temperatures. In this model, the end points correspond to Brownian particles with different damping coefficients. Such analytical expression for the heat conductance covers its behavior from  mesoscopic to very long one-dimensional quantum chains, and validates the ballistic nature of the heat transport in the latter example. This implies the absence of the Fourier law for classical and quantum harmonic chains. We  also provide a thorough analysis of the normal modes of system which helps us to satisfactorily interpret these results.
\end{abstract}

\author{G. A. Weiderpass}
\altaffiliation[Present address: ]{Department of Physics, The University of Chicago, Chicago, Illinois 60637, USA}
\affiliation{Instituto de  F\'\i sica Gleb Wataghin, Universidade Estadual de Campinas, 13083-859, Campinas, SP, Brazil}
\author{Gustavo M. Monteiro}
\altaffiliation[Present address: ]{Department of Physics, City College, City University of New York, New York, NY 10031, USA}
\affiliation{Instituto de  F\'\i sica Gleb Wataghin, Universidade Estadual de Campinas, 13083-859, Campinas, SP, Brazil}
\author{A. O. Caldeira}
\affiliation{Instituto de  F\'\i sica Gleb Wataghin, Universidade Estadual de Campinas, 13083-859, Campinas, SP, Brazil}

\maketitle

\section{Introduction\label{intro}}

Fourier law in classical macroscopic systems is a well-tested phenomenological statement for both liquids and solids. It states that the heat flux density $\boldsymbol{J_E}$, which is the amount of heat that flows through a unit area per unit time, can be locally expressed as
\begin{align}
    \boldsymbol{J_E}=-\kappa \boldsymbol \nabla T,
\end{align}
where $\kappa$ is the thermal conductivity and $T$ the local temperature.  This dependence on the temperature gradient implies that if we connect two thermal reservoirs by a bar with thermal conductivity $\kappa$, the heat flux will decrease with the length of the bar. However, since the work of Rieder \textit{et al.} \cite{Rieder}, it has been shown that modeling the solid connecting the reservoirs as a harmonic chain leads to a heat flux which is independent of the length of the bar. This implies that the thermal  conductance does not depend on the number of oscillators, which is a characteristic of ballistic transport. This behavior is known as anomalous heat conduction. Rieder analysis however takes only classical effects into account and relies on approximations on the coupling mechanism with the reservoirs, white noise, and Gaussian statistics. An anomalous temperature profile was also found along the bar. The temperature of the bar is constant until we reach oscillators very near its endpoints  where it starts decreasing (increasing) and then sharply increases (decreases) to reach the temperature of the hot (cold) reservoir \cite{Rieder}. As pointed out by \cite{DharSpohn,Lebowitz}, this result is believed to be a feature of integrable systems.

A large effort was made to take into account quantum effects, different coupling mechanisms, mass disorder, and go beyond the Gaussian approximation \cite{DharSpohn,Lebowitz,Das,Spohn,Das2,O'Connor,Zurcher,Saito,Segal,Dhar,Bonetto,BetiniReview,Panasyuk,Dhar2,Asadian,SaitoDhar,DharReview}. Of particular importance is the work by Segal \textit{et al.} \cite{Segal} where numerical studies using Restricted Hartree-Fock methods have hinted an $1/N$ dependence on the heat flux. However, the exact solution of a linear chain coupled with two distinct reservoirs taking into account, coloured noise, and non-Gaussian statistics remains elusive \cite{Dhar2,Asadian}. For a review on the subject of transport in one-dimensional systems, we refer to \cite{DharReview,BetiniReview,Bonetto}.

A second approach to the problem of heat conduction through mesoscopic devices is based on Landauer theory \cite{Landauer1,Landauer2,Glazman,Kirczenow,Szafer,Haanappel,Escapa,Tekman,Avishai,Castano}. Changing the formalism of Landauer theory to phonons, an expression for the heat conduction through a molecular junction was derived \cite{Angelescu,Rego,Blencowe}. Also based on the quantization of the electrical conductance \cite{Wees,Wharam}, a quantum thermal conductance was proposed \cite{Angelescu,Rego,Blencowe} and measured experimentally  soon afterwards \cite{Schwab}.

A third approach which has gained a lot of momentum in the last decade uses the formalism of dynamical semigroups \cite{Breuer}, to study 1-dimensional chains described by the spin $1/2$ XXZ and Fermi-Hubbard models \cite{Znidaric1,Znidaric2,Mendonza,Prosen1,Prosen2}. For a review on this approach we refer to \cite{BetiniReview}.

In this contribution, we study the heat transport in mesoscopic systems for both quantum and classical regimes following the first approach, that is, using quantum Brownian motion formalism for open quantum system \cite{Breuer,Weiss,caldeira1}. For such, we consider a finite harmonic chain  whose endpoints are coupled to thermal reservoirs, held at different temperatures. In sec. II we present the model we use throughout the paper, and establish the expression which describes the exchange of heat between the thermal reservoirs via the harmonic chain, namely, the heat flux. In sec. III we propose the method by which we find an analytical solution for the thermal conductance in terms of an integral which can be solved numerically. In sec IV we study the normal modes of our system when the coupling constants are the same on both ends of the chain. The discussion of different coupling constant is presented in the supplemental material. In sec V we show how to pertubativelly obtain an expression of the normal modes in the limit of large number of particles in the chain. In sec VI we present the closed analytical expression of the heat flux through the harmonic chain for both quantum and classical underling mechanics, and we discuss the thermal conductance in the appropriate limit on which this quantity can be defined. Finally, we present our conclusions in sec. VII.
 
\section{The model\label{model}}

 The particular set up we use to describe our system is  schematically shown in Fig.~\ref{chain}. It is composed of a finite chain of $N$ particles of mass $m$, each of which is coupled to its nearest neighbors by harmonic springs  with frequency $\omega_0$. Besides, the  endpoint particles are also coupled to two distinct thermal baths, $L$ and $R$, held at temperatures $T_{L}$ and  $T_{R}$, respectively. We then assume that the coupling between the endpoint particles and the environments can be cast into the well-known form of a particle coupled to a bath of non-interacting harmonic oscillators \cite{caldeira1,Breuer,Weiss} which, with a specific choice of the so-called spectral function (see below), has been used in the literature to describe quantum Brownian motion. Therefore, the  system's dynamics is governed by the Hamiltonian
\begin{align}\label{H}
 H&=\sum\limits_{j=1}^N \,\frac{P_j^2}{2m}+\sum\limits_{j=1}^{N-1}\frac{m\omega_0^2}{2}\left(X_{j+1}-X_j\right)^2 +\nonumber
 \\ 
&\sum\limits_{\substack{a=L,R\\ i}}\left[ \frac{p_{a i}^2}{2m_{ai}}+\frac{m_{ai}\,\omega_{ai}^2}{2}\left(q_{ai}-\frac{C_{ai}\,X_{a}}{m_{ai}\,\omega_{ai}^2}\right)^2\right],
\end{align} 
\noindent where the sets $\{X_j,P_j\}$ and $\{q_{ai},p_{ai}\}$ refer to the canonical coordinates of the  chain and both reservoir ($a=L,R$) particles, respectively.  We have  also identified $X_1:= X_L$ and $X_N:= X_R$ in Eq.~(\ref{H}), in order to shorten the notation. Moreover, $m_{a i}$, $\omega_{ai}$, and $C_{ai}$ are, respectively, the masses, frequencies and coupling constants of each environment oscillator.
\begin{figure} [h]

\centering
\includegraphics[scale=.36]{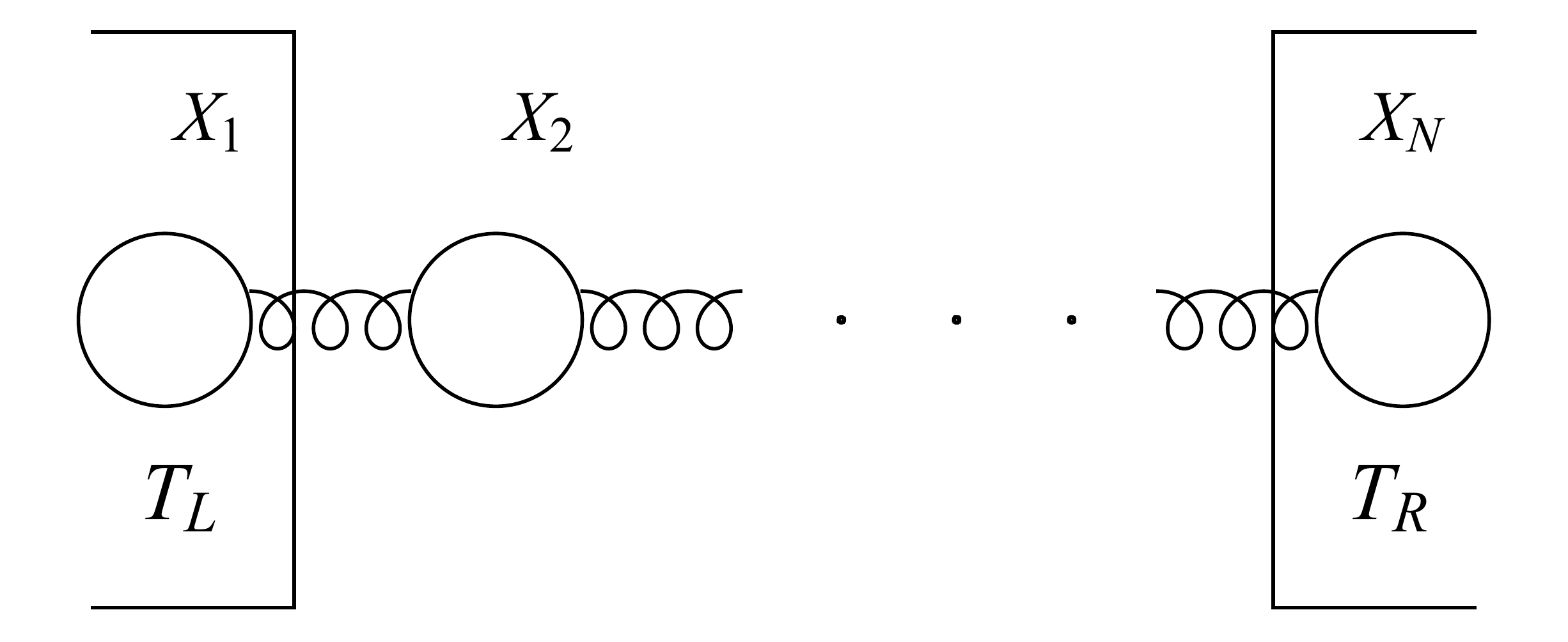}
\caption{Model of an harmonic chain coupled to thermal reservoirs at temperatures $T_L$ and $T_R$. This chain is described by the set of coupled equations (\ref{X1}-\ref{XN}).}
\label{chain}
\end{figure} 

 In order to ensure that the endpoints perform  Brownian motion, we have to model the spectral function, $\mathcal J_{a}(\omega)$, of each reservoir by 
\begin{equation}\label{J}
\mathcal J_{a}(\omega):= \frac{\pi}{2} \sum\limits_{i}\frac{C_{ai}^2}{m_{ai}\omega_{ai}}\delta(\omega-\omega_{ai})=\eta_{a}\omega\,\Theta(\Omega-\omega),
\end{equation}
where $\Theta(\Omega-\omega)$ is the heavyside function,  $\Omega$ is a high frequency cutoff, and $\eta_{a}$ is the damping constant, which shows up in the Langevin equation.

Writing the Heisenberg equations of motion for all the operators involved in (\ref{H}), and following the standard procedure to eliminate the environment coordinates thereof \cite{caldeira1,Breuer,Weiss} we end up with
\begin{align}
    &\ddot{X}_1+\frac{\eta_L}{m}\dot{X}_1-\,\omega_0^2(X_{2}-X_1)=\frac{F_L(t)}{m}, \label{X1}
    \\
     &\ddot{X}_j-\omega_0^2(X_{j+1}-2X_j+X_{j-1})=0,\quad 2\leq j< N, \label{Xj}
   \\ 
    &\ddot{X}_N+\frac{\eta_R}{m}\dot{X}_N+\omega_0^2(X_{N}-X_{N-1})=\frac{F_R(t)}{m}, \label{XN}
\end{align}
in the long time ($\Omega\rightarrow\infty$) approximation. Here, $F_{L,R}(t)$ correspond to fluctuating forces which average to zero, and possesses non-vanishing correlations, that is,
\begin{align}
     &\frac{1}{2}\langle\{F_{a}(t),F_{b}(t')\}\rangle=\nonumber
     \\
     &=\frac{\eta_{a}\hbar\delta_{ab}}{\pi}\int\limits_{0}^{\infty}d\omega\omega\coth\left(\frac{\hbar\omega}{2k_BT_a}\right)\cos(\omega(t'-t))\nonumber
     \\ 
     &=\frac{\eta_{a}\hbar\delta_{ab}}{2\pi}\int\limits_{-\infty}^{\infty}d\omega\omega\coth\left(\frac{\hbar\omega}{2k_BT_a}\right)e^{-i\omega(t'-t)}, \label{Noise}
\end{align}
for $a,b=L,R$. In Eq.~(\ref{Noise}), the angle brackets refer to the thermal average (expectation value over the bath degrees of freedom). 

Similar versions of this model have already been studied using white instead of the coloured noise \cite{Zurcher}, Born-Markov approximation (which fails to capture quantum effects like entanglement \cite{Saito,Dhar2,Asadian}), numerical methods \cite{Segal} or treating the coupling perturbatively \cite{Dhar}. However, we here provide an {\it analytical solution} for the heat flux along a chain connecting two identical reservoirs at different temperatures, in the stationary regime. Such expression  provides a fully 
quantum mechanical description of the heat transfer along the chain. 

The stationary regime of this system is achieved when all the energy transferred from the hotter reservoir to the chain corresponds exactly to the entire energy transferred from the chain to the colder reservoir. In other words, the total energy of the chain must remain constant with time. Since the chain energy is defined through the expectation value of the chain Hamiltonian, that is,
\begin{equation}
    E_{chain}=\left\langle\sum_{j=1}^N\frac{m}{2}\dot X_j^2+\sum_{j=1}^{N-1}\frac{k}{2}(X_{j+1}-X_j)^2\right\rangle,
\end{equation}
it is easy to show, using Eqs.~(\ref{X1}-\ref{XN}), that
\begin{align}
    \frac{d}{dt}E_{chain}=\;&\frac{1}{2}\left\langle\{\dot{X}_1(t),F_L(t)\}\right\rangle-\eta_L\left\langle\dot{X}_1^2(t)\right\rangle\nonumber
    \\
    &+\frac{1}{2}\left\langle\{\dot{X}_N(t),F_R(t)\}\right\rangle-\eta_R\left\langle\dot{X}^2_N(t)\right\rangle\,.\label{energy-rate}
\end{align}
Since the coupling to both reservoirs dampens the chain normal modes, these oscillators are expected to reach the stationary regime for sufficiently long time regardless of their initial configuration. In this case, the left hand side of Eq.~(\ref{energy-rate}) vanishes and we can relate the heat flux from the left reservoir to the left end of the chain, first line of Eq.~(\ref{energy-rate}), with the heat flux from the right end of the chain to the right reservoir, second line of Eq.~(\ref{energy-rate}). Thus, the heat flux from the left reservoir to the right one can be defined as
\begin{align} 
J_E=&\lim_{t \rightarrow \infty}  \left[\frac{1}{2} \left \langle \{\dot X_1(t),F_L(t)\} \right \rangle- \eta_L \left\langle \dot X_1^2(t)\right \rangle \right],  \label{phi1}\\
=&-\lim_{t \rightarrow \infty} \left[\frac{1}{2} \left \langle \{\dot X_N(t),F_R(t)\} \right \rangle- \eta_R \left\langle \dot X_N^2(t)\right \rangle \right].
\end{align}
Because all the normal modes die out for sufficiently long times, the expectation value coincides with the thermal average.

\section{the $Z$-transform \label{Z}}

The scope of this  work is to provide an analytic expression for Eq.~(\ref{phi1}),  hence it is convenient to assume the chain was put in contact to both reservoirs in the far past, i.e., $t\rightarrow-\infty$. Therefore, we can neglect the contribution from the oscillators initial configuration, given that the normal modes decay after some time. This allows us to solve the Eqs.~(\ref{X1}-\ref{XN}) via Fourier instead of Laplace transform in time. Let us denote the Fourier transform of the variables by a tilde on top of them, that is,
\begin{equation}
    \tilde X_j(\omega)=\int\limits_{-\infty}^{\infty}X_j(t)e^{i\omega t} dt\,.
\end{equation}
Therefore, the Heisenberg equations~(\ref{X1}-\ref{XN}) become
\begin{align}
    &\omega^2\tilde{X}_1+\frac{i\omega\eta_L}{m}\tilde{X}_1+\omega_0^2(\tilde X_{2}-\tilde X_1)=-\frac{\tilde F_L(\omega)}{m}, \label{X1-Fourier}
    \\
     &\omega^2\tilde{X}_j+\omega_0^2(\tilde X_{j+1}-2\tilde X_j+\tilde X_{j-1})=0,\quad  j\neq 1, N\,, \label{Xj-Fourier}
   \\ 
    &\omega^2\tilde{X}_N+\frac{i\omega\eta_R}{m}\tilde{X}_N+\omega_0^2(\tilde X_{N-1}-\tilde X_{N})=-\frac{\tilde F_R(\omega)}{m}. \label{XN-Fourier}
\end{align}
Eqs.~(\ref{X1-Fourier}-\ref{XN-Fourier}) correspond to an eigenvalue problem and, in order to solve it, let us define the operator valued analytic function
\begin{equation}
    \mathfrak X(z,\omega):=\sum_{j=1}^{N}\frac{\tilde X_j(\omega)}{z^j}. \label{z-transf}
\end{equation}
This is the finite version of the so-called $Z$-transform, which is used in signal processing. From this definition, $\mathfrak X(z,\omega)$ must vanish at infinity and cannot have any pole of order $N+1$ or higher at $z=0$. The latter imposes that
\begin{equation}
    \varointctrclockwise \mathfrak X(z,\omega)z^{N+n}dz=0, \label{no-pole}
\end{equation}
for any non-negative integer $n$. As a consequence of the Residue Theorem, for any contour $\mathcal C$ that encloses $z=0$, we obtain
\begin{equation}
    \tilde X_j(\omega)=\frac{1}{2\pi i}\varointctrclockwise_{\mathcal C} \mathfrak X(z,\omega)z^{j-1}dz,\quad 1\leq j\leq N. \label{Xj-residue}
\end{equation}
After some algebra, one can show that $\mathfrak X(z,\omega)$ depends only on the endpoint positions, $\tilde X_1$ and $\tilde X_N$ as well as on the fluctuating forces, $\tilde F_L$ and $\tilde F_R$, i.e.,
\begin{align}
   \mathfrak X=\;&\frac{1}{z^2 -(2-\frac{\omega^2}{\omega_0^2})z+1}\left[\left(z-1-\frac{i \eta_L \omega}{m\omega_0^2}\right)\tilde X_1-\frac{\tilde F_L}{m\omega_0^2}\right. \nonumber
   \\
&\left.-\frac{\tilde F_R}{m\omega_0^2\, z^{N-1}}+\left(1-z-\frac{i \eta_R \omega z}{m\omega_0^2}\right)\frac{\tilde X_N}{z^N}\right]. \label{Xz}
\end{align}

At this point, it is convenient to introduce the parametrization $\omega =2\,\omega_0 \sin\frac{\theta}{2}$. This redefinition simplifies the calculation of residues, given that the denominator can be expressed in terms of Chebyshev polynomials of second kind, namely
\begin{equation*}
   \frac{1}{z^2-2z\cos \theta  +1}=\sum_{n=1}^\infty \frac{\sin(n\theta)}{\sin\theta} z^{n-1}\;.
\end{equation*}
From that, we are able to express $\tilde X_j$ in terms of $\tilde X_1$ and $\tilde F_L$, that is,
\begin{align}
 \tilde X_j(\omega)=&\;\frac{\tilde X_1(\omega)}{\cos\frac{\theta}{2}}\left[\cos((j-\tfrac{1}{2})\theta)-\frac{ i \eta_L}{m\omega_0}\sin((j-1)\theta)\right]\nonumber
 \\ 
 &-\frac{\tilde F_L(\omega)}{m\omega_0^2}\, \frac{\sin((N-1)\theta)}{\sin \theta}\;, \quad 2\leq j\leq N. \label{Xj-solution}
\end{align}
 Here, $\theta$ must be viewed as an implicit function of $\omega$. Taking $j=N$ in Eq.~(\ref{Xj-solution}), plugging it into Eq.~(\ref{Xz}), and imposing Eq.~(\ref{no-pole}), we are able to express $\tilde X_1$ solely in terms of $\tilde F_L$ and $\tilde F_R$. Thus, the expression for $\tilde X_1$ can be written as
\begin{equation}
     \tilde X_1(\omega)=\frac{A(\omega) \tilde F_L(\omega)+B(\omega)\tilde F_R(\omega)}{m\omega_0\,\omega D(\omega)}, \label{X1-omega}
\end{equation}
with
\begin{align}
 B(\omega)=&\;-\cos\frac{\theta}{2}\,, \label{B}
 \\
 A(\omega)=&\;2i\alpha_R\sin\Big((N-1)\theta\Big)-\cos\Big(\left(N-\tfrac{1}{2}\right)\theta\Big)\,, \label{A}
 \\
 D(\omega)=&\;\sin(N\theta)+ 2i(\alpha_L+\alpha_R)\cos\Big(\left(N-\tfrac{1}{2}\right)\theta\Big) \nonumber
    \\
    &\;+4\alpha_L\alpha_R\sin\Big((N-1)\theta\Big)\,, \label{D}
\end{align}
where we have defined the dimensionless relaxation constants for each reservoir as $\alpha_{L,R}:=\eta_{L,R}/2m\omega_0$.

\section{Normal modes}

The chain normal modes can be obtained by setting the fluctuating forces to zero in Eqs.~(\ref{Xz}, \ref{Xj-solution}) and imposing the condition (\ref{no-pole}). This implies that the normal mode dispersion is given by $\omega D(\omega)/B(\omega)=0$, which is nothing but the characteristic polynomial coming from Eqs.~(\ref{X1-Fourier}-\ref{XN-Fourier}). This means that $\omega D(\omega)/B(\omega)$ is as a polynomial of order $2N$ in $\omega$. From Eq.~(\ref{Xj-solution}), we can see that the zero-mode $\omega=0$ corresponds to a rigid translation of the whole chain, $\tilde X_j(0)=\tilde X_1(0)$, and do not contribute to the heat transport \footnote{The stationary heat current depends only on $\dot X_1$ and not on $X_1$.}. The other $2N-1$ roots are the zeros of $D(\omega)/B(\omega)$. 

For the sake of simplicity, the analysis presented in this section is restricted only to the case when $\alpha_L=\alpha_R=\alpha$. For the general case, we direct the reader to the supplemental material \cite{SM}. For $\alpha_L=\alpha_R=\alpha$, we have that
\begin{align}
    \frac{D(\omega)}{ B(\omega)}=&\,-\sec\tfrac{\theta}{2}\left[\sin(N\theta)+4\alpha^2\sin\Big((N-1)\theta\Big)\right.\nonumber
    \\
    &+ \left.4i\alpha\cos\Big(\left(N-\tfrac{1}{2}\right)\theta\Big)\right]\,, \label{D-alpha}
\end{align}
which is a quadratic polynomial in $\alpha$. The roots of this polynomial are two complex-valued transcendental equations, whose solutions correspond to the normal mode dispersion without the zero-mode. They can be written explicitly as
\begin{align}
    \alpha&=\frac{i}{2}\left[\sin\tfrac{\theta}{2}+\tan\left(\tfrac{N-1}{2}\theta\right)\cos\tfrac{\theta}{2}\right], \label{alpha-tan}
    \\
    \alpha&=\frac{i}{2}\left[\sin\tfrac{\theta}{2}-\cot\left(\tfrac{N-1}{2}\theta\right)\cos\tfrac{\theta}{2}\right]. \label{alpha-cot}
\end{align}
Note that Eqs.~(\ref{alpha-tan}) and (\ref{alpha-cot}) combined are completely equivalent to $D(\omega)/B(\omega)=0$, when $\alpha_L=\alpha_R=\alpha$. Since Eqs.~(\ref{alpha-tan}, \ref{alpha-cot}) are implicit functions of $\omega$, it is convenient to work directly with $\theta$, keeping in mind that $\omega=2\omega_0\sin\tfrac{\theta}{2}$. 

Because $\alpha$ is real, we can clearly see that Eqs.~(\ref{alpha-tan}, \ref{alpha-cot}) admit no solution for $\theta\in\mathbb R$. This indicates that $D(\omega)/B(\omega)$ has no zeros in the in the real interval $\omega\in[-2\omega_0,2\omega_0]$. In fact, we show in \cite{SM} that, for $\theta=\varphi+i\xi$ and $\varphi\in[-\pi,\pi]$, the zeros of Eq.~(\ref{D-alpha}) must lie inside the region
\begin{equation}
    -\frac{1}{N-1}\ln\left(\left|\csc\tfrac{\varphi}{2}\right|+\sqrt{\csc
^2\tfrac{\varphi}{2}+1}\;\right)<\xi<0. \label{region}
\end{equation}
Moreover, the reality condition $\alpha=\alpha^*$ imposes that
\begin{align}
     \alpha&=\frac{i}{2}\left[\sin\tfrac{\theta}{2}+\tan\left(\tfrac{N-1}{2}\theta\right)\cos\tfrac{\theta}{2}\right]\nonumber
     \\
     &=\frac{i}{2}\left[\sin\left(-\tfrac{\theta^*}{2}\right)+\tan\left(-\tfrac{N-1}{2}\theta^*\right)\cos\left(-\tfrac{\theta^*}{2}\right)\right], \label{alpha-tan2}
    \\
    \alpha&=\frac{i}{2}\left[\sin\tfrac{\theta}{2}-\cot\left(\tfrac{N-1}{2}\theta\right)\cos\tfrac{\theta}{2}\right] \nonumber
    \\
    &=\frac{i}{2}\left[\sin\left(-\tfrac{\theta^*}{2}\right)-\cot\left(-\tfrac{N-1}{2}\theta^*\right)\cos\left(-\tfrac{\theta^*}{2}\right)\right]. \label{alpha-cot2}
\end{align}
Therefore, if $\theta_n$ is a solution of Eq.~(\ref{alpha-tan}), so is $-\theta_n^*$. Equivalently, if $\theta_n$ is a solution of Eq.~(\ref{alpha-cot}), so is $-\theta_n^*$. In terms of $\omega$, this means that if $\omega_n:= \omega(\theta_n)$ is a normal mode frequency, so is $-\omega_n^*$. This is a consequence of $X_j(t)=X^\dagger_j(t)$ and holds true even for the case when $\alpha_L\neq\alpha_R$. To see that, let us express the self-adjoint condition of $X_j(t)$ in terms of its Fourier components $\tilde X_j(\omega)$ and use that $\tilde X_j(\omega)$ is an analytic operator-valued function of $\omega$, that is,
\begin{equation}
     \tilde X_j(\omega)= \tilde X_j^\dagger(-\omega)=\tilde X_j(-\omega^*)
 \end{equation} 
This shows that both $\omega_n$ and $-\omega_n^*$ contribute to the same eigenmode. 
 
Because roots with nonzero real part always come in pairs, $\{\omega_n,-\omega_n^*\}$, there has to be at least one zero on the imaginary axis of the $\omega$ plane. For imaginary values of $\omega$, we have that $\theta=i\xi$. Plugging it into Eqs.~(\ref{alpha-tan}, \ref{alpha-cot}), give us
 \begin{align}
      \alpha&=-\frac{1}{2}\left[\sinh\tfrac{\xi}{2}+\tanh\left(\tfrac{N-1}{2}\xi\right)\cosh\tfrac{\xi}{2}\right], \label{alpha-tanh}
    \\
    \alpha&=-\frac{1}{2}\left[\sinh\tfrac{\xi}{2}+\coth\left(\tfrac{N-1}{2}\xi\right)\cosh\tfrac{\xi}{2}\right]. \label{alpha-coth}
 \end{align}
 
Since the right hand side of Eq.~(\ref{alpha-tanh}) is a monotonically decreasing unbounded function, Eq.~(\ref{alpha-tanh}) always admits solution for $\xi<0$. On the other hand, the function on the right hand side of Eq.~(\ref{alpha-coth}) possesses a global minimum $\alpha_c$ for $\xi<0$ and only admits solution for $\alpha\geq\alpha_c$. The critical value $\alpha_c$ corresponds to the transition between the underdamped case, with only one root on the imaginary axis, and the overdamped case, with 3 zeros on the imaginary axis. It turns out that there is no explicit formula to $\alpha_c$, however from the inequality
\begin{equation}
  1<\frac{\sinh\tfrac{\xi}{2}+\coth\left(\tfrac{N-1}{2}\xi\right)\cosh\tfrac{\xi}{2}}{-\,e^{-\tfrac{1}{2}\xi}}< 1-\frac{2}{(N-1)\xi}\,,
\end{equation}
we can show that
\begin{equation}
    \frac{1}{2}< \alpha_c <\frac{1}{2}\exp\left[\frac{\sqrt{4N-3}-1}{2N-2}\right]\frac{\sqrt{4N-3}+1}{\sqrt{4N-3}-1}\,. \label{alpha-ineq}
\end{equation}

\begin{figure}
\centering
\includegraphics[scale=0.68]{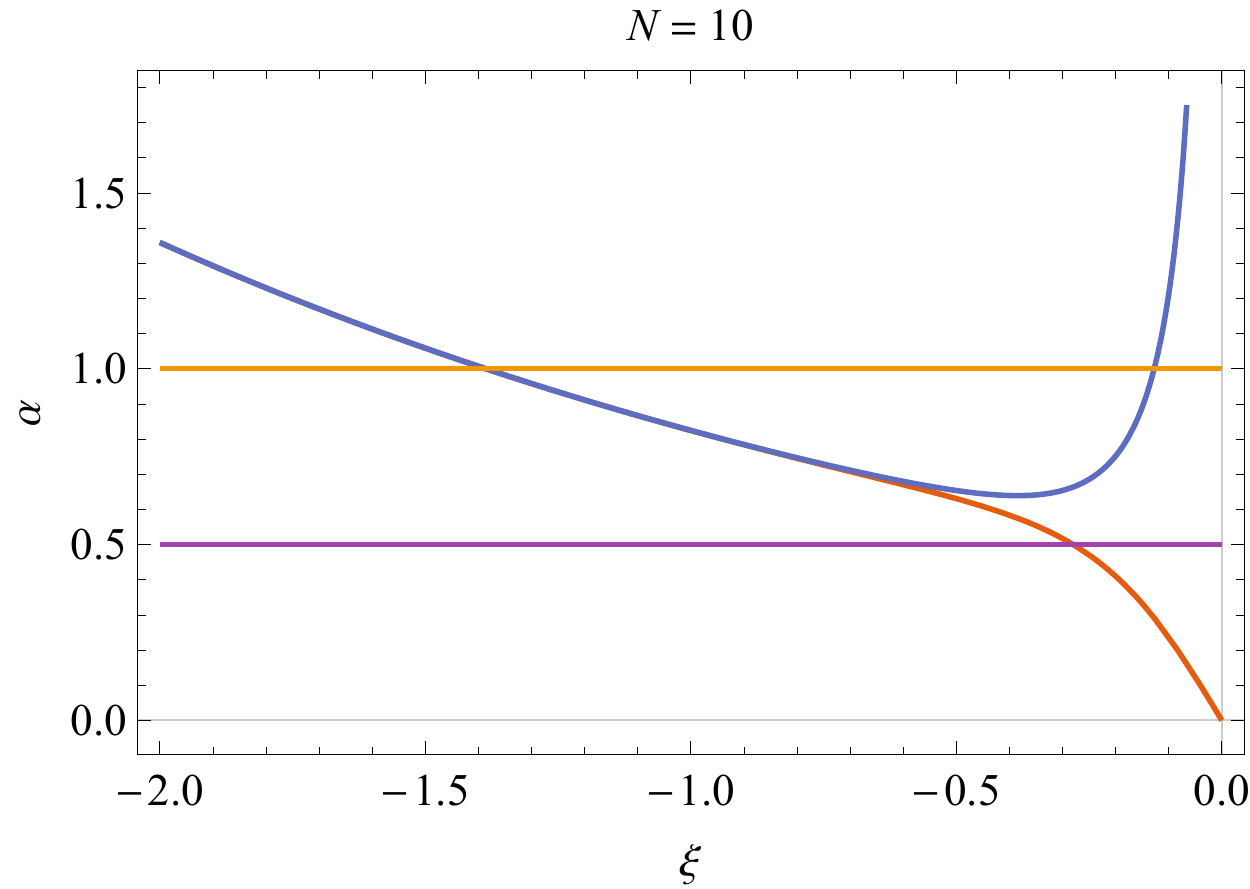}
\caption{Nomal mode frequencies on the imaginary axis.}
\label{Normal mode}
\end{figure}

\section{Large $N$ expansion}

From Eq.~(\ref{region}), we see that the normal mode frequencies which are not on the imaginary axis can be made arbitrarily close to the real axis, the more we increase the number of oscillators. In the limit of large $N$, we can use $1/N$ as the expansion parameter and determine the roots of $D(\omega)/B(\omega)$ perturbatively in powers of $1/N$. This however does not necessarily captures all the roots on the imaginary axis since some of them become non-pertubative, as shown in Fig.~\ref{Normal mode}. Because all non-perturbative zeros must lie on the imaginary axis, we can deal with them separately.
In this $1/N$ expansion, we assume that $\theta$ can be written in the form
\begin{equation}
    \theta=\sum_{k=1}^\infty\frac{\vartheta_k}{N^k}. \label{theta-N}
\end{equation}
Plugging Eq.~(\ref{theta-N}) into the expression $D(\omega)/B(\omega)$ and expanding it in powers of $1/N$ give us
\begin{align}
\frac{D(\omega)}{B(\omega)}&=\cos\vartheta_1\Big\{\Big[\left(1+4\alpha_L\alpha_R\right)\tan\vartheta_1+2i(\alpha_L+\alpha_R)\Big]\nonumber
\\
&+\frac{1}{N}\left[\vartheta_2\,\Big(1+4\alpha_L\alpha_R-2i(\alpha_L+\alpha_R)\tan\vartheta_1\Big)\right.\nonumber
\\
&+\left.\left.\vartheta_1\Big(i(\alpha_L+\alpha_R)\tan\vartheta_1-4\alpha_L\alpha_R\Big)\right]\right\}+\mathcal O\left(N^{-2}\right).
\end{align}

The normal mode dispersion is obtained by equating each coefficient of this expansion to  zero. Here, we restrict the analysis only to  the first two terms. The leading order coefficient gives us
\begin{figure*}
\centering
\includegraphics[width=\textwidth]{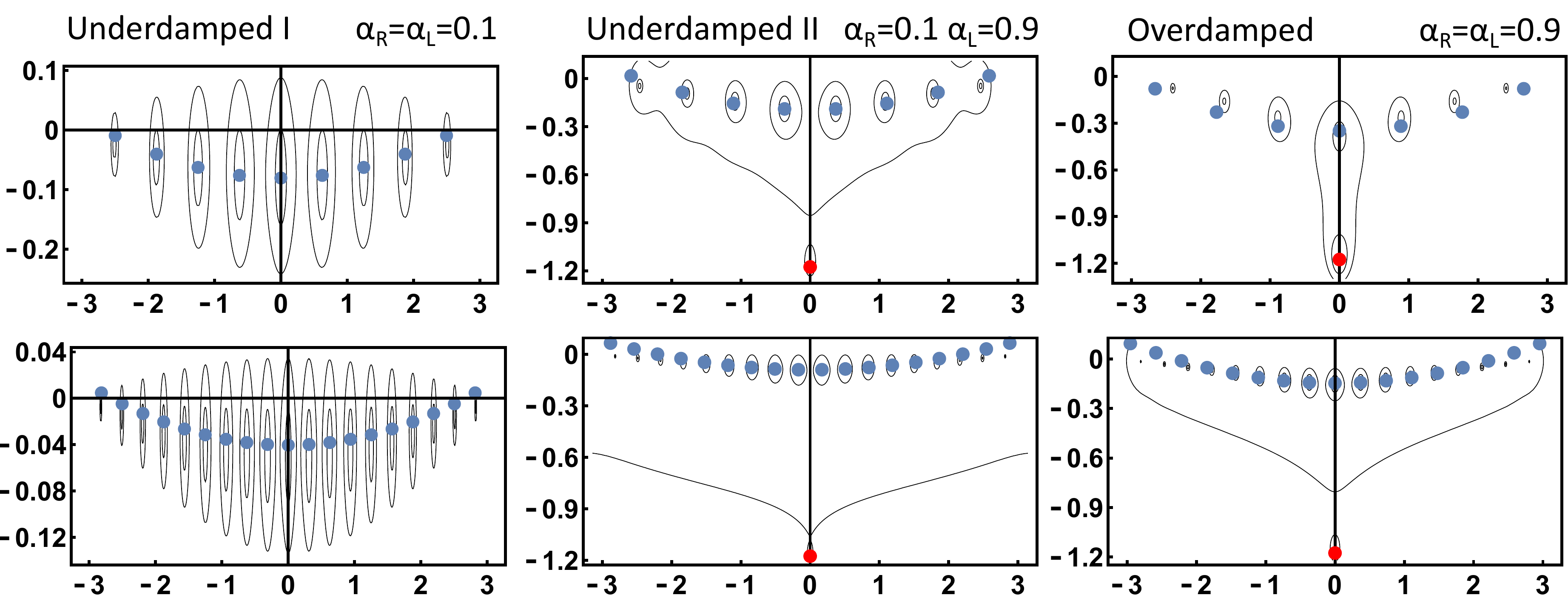}
\caption{Contour plot of equation (\ref{D-alpha}) with the roots approximated by the perturbative technique developed in section V up to $\mathcal{O}(N^{-3})$. The first line is plotted for $N=5$ and the second for $N=10$ and the blue (red) points are the perturbative (non-perturbative) roots. The approximation up to $\mathcal{O}(N^{-2})$ of $\theta_n$ is linear and not able to account for the inflected curve of the distribution of poles.}
\label{Polesfig}
\end{figure*}
\begin{equation}
    \tan\vartheta_1=-\frac{2i(\alpha_L+\alpha_R)}{1+4\alpha_L\alpha_R}=i\left[\frac{(\alpha_L-\frac{1}{2})(\alpha_R-\frac{1}{2})}{\alpha_L\alpha_R+\frac{1}{4}}-1\right], \label{damping}
\end{equation}
and at order $1/N$, we find
\begin{align}
    \vartheta_2&=\frac{4\alpha_L\alpha_R-i(\alpha_L+\alpha_R)\tan\vartheta_1}{1+4\alpha_L\alpha_R-2i(\alpha_L+\alpha_R)\tan\vartheta_1}\,\vartheta_1\,,\nonumber
    \\
     \vartheta_2&=\frac{16\alpha^2_L\alpha^2_R-2(\alpha^2_L+\alpha^2_R)}{(4\alpha_L^2-1)(4\alpha_R^2-1)}\,\vartheta_1\,.
\end{align}
In the last line, we have replaced $\tan\vartheta_1$ by the right hand side of Eq.~(\ref{damping}).

The expression~(\ref{damping}) only admits solution on the imaginary axis when $\alpha_L,\alpha_R<1/2$ or $\alpha_L,\alpha_R>1/2$. In fact, when $\max(\alpha_L,\alpha_R)>1/2$ and $\min(\alpha_L,\alpha_R)<1/2$, the single root on the imaginary axis become non-perturbative and cannot be captured by the ansatz~(\ref{theta-N}). Moreover, for $\alpha_L,\alpha_R>1/2$, there are 3 zeros on the imaginary axis and 2 of them become nonpertubative. In the following, we consider the three cases separately.

\subsection{Underdamped case 1: $\alpha_L<\frac{1}{2}$ and $\alpha_R<\frac{1}{2}$}

In this case, all roots are perturbative and we must get $2N-1$ solutions from Eq.~(\ref{damping}). Thus, solving for $\vartheta_1$,
\begin{equation}
    \vartheta_1=n\pi-i\delta, 
\end{equation}
for $n\in\mathbb Z$, such that $-N+1\leq n\leq N-1$ and 
\begin{equation}
    \delta:=\text{arctanh}\left[\frac{2(\alpha_L+\alpha_R)}{1+4\alpha_L\alpha_R}\right].
\end{equation}

Therefore, the roots $\theta_n$ are of the form
\begin{equation}
    \theta_n=\left[1+\frac{16\alpha^2_L\alpha^2_R-2(\alpha^2_L+\alpha^2_R)}{N(4\alpha_L^2-1)(4\alpha_R^2-1)}\right]\left(\frac{n\pi}{N}-i\frac{\delta}{N}\right).
\end{equation}

\subsection{Underdamped case 2: $\min(\alpha_L,\alpha_R)<\frac{1}{2}$ and $\max(\alpha_L,\alpha_R)>\frac{1}{2}$}

In this case, the imaginary root is nonperturbative and can be written as
\begin{equation}
 \theta_{+}=-2i\ln(2\max(\alpha_L,\alpha_R)).   
\end{equation}
For details check the SM \cite{SM}. The other $2N-2$ normal mode frequencies are the solutions of Eq.~(\ref{damping}), namely
\begin{equation}
    \vartheta_1=\left(n+\tfrac{1}{2}\right)\pi-i\tilde\delta, 
\end{equation}
for $n\in\mathbb Z$, such that $-N+1\leq n\leq N-2$ and 
\begin{equation}
    \tilde\delta:=\text{arccoth}\left[\frac{2(\alpha_L+\alpha_R)}{1+4\alpha_L\alpha_R}\right].
\end{equation}
The zeros for this case are
\begin{equation}
    \theta_n=\left[1+\frac{16\alpha^2_L\alpha^2_R-2(\alpha^2_L+\alpha^2_R)}{N(4\alpha_L^2-1)(4\alpha_R^2-1)}\right]\left(\frac{n+\frac{1}{2}}{N}\,\pi-i\frac{\tilde\delta}{N}\right).
\end{equation}

\subsection{Overdamped case: $\alpha_L>\frac{1}{2}$ and $\alpha_R>\frac{1}{2}$}

Here, two of the imaginary roots are non-perturbative and given by
\begin{equation}
    \theta_L=-2i\ln(2\alpha_L)\quad \text{and}\quad \theta_R=-2i\ln(2\alpha_R).
\end{equation}
Again, for details we refer to \cite{SM}. The other $2N-3$ roots are the same as in the case when $\alpha_L,\alpha_R<1/2$, with the exception of the pair of endpoint zeros, that is,
\begin{equation}
    \theta_n=\left[1+\frac{16\alpha^2_L\alpha^2_R-2(\alpha^2_L+\alpha^2_R)}{N(4\alpha_L^2-1)(4\alpha_R^2-1)}\right]\left(\frac{n\pi}{N}-i\frac{\delta}{N}\right),
\end{equation}
for $n\in\mathbb Z$, such that $-N+2\leq n\leq N-2$.

\section{Thermal conductance}

In this section, we provide an exact expression for the heat conductance of a harmonic chain. Again, for calculation details we refer to \cite{SM}. Plugging expression (\ref{X1-omega}) into Eq.~(\ref{phi1}) and using Eq.~(\ref{Noise}), we end up with  with the following expression for the heat current  
\begin{align}
    J_E=\frac{\hbar\alpha_L\alpha_R}{\pi}\int\limits_{-\infty}^{\infty} &d\omega\; \frac{\omega B(\omega)B(-\omega)}{D(\omega)D(-\omega)} \left[ \coth\left(\frac{\hbar\omega}{2k_BT_R}\right)   \right.\nonumber
    \\
    &-\left.\coth\left(\frac{\hbar\omega}{2k_BT_L}\right) \right].\label{heat-flux}
\end{align}
Eq.~(\ref{heat-flux}) presented this way is physically intuitive, since the term in square brackets is nothing but the difference between the Bose-Einstein distribution function of the reservoirs and the term multiplying it accounts for the normal mode contribution, that is, 
\[
\frac{B(\omega)}{D(\omega)}\times\frac{B(-\omega)}{D(-\omega)}\propto\prod_{n=1}^{2N-1}\frac{1}{\omega^2-\omega_n^2},
\]
where $\omega_n$ are the normal mode frequencies with the zero-mode removed.
\begin{figure}
\centering
\includegraphics[scale=0.45]{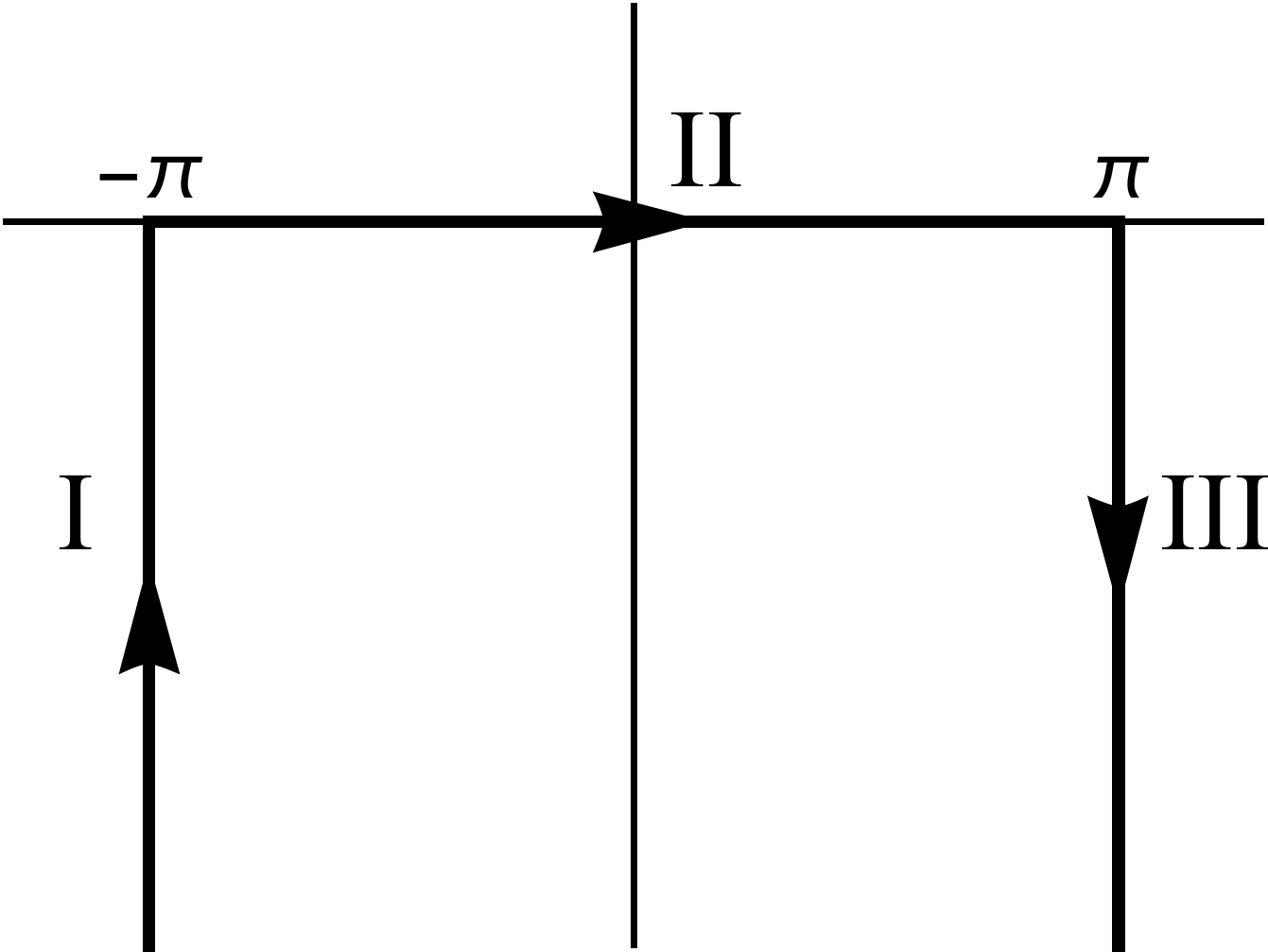}
\caption{Countour integration $\mathcal C$ in the complex $\theta$-plane.}
\label{countour}
\end{figure}

The heat current (\ref{heat-flux}) is an integral over the whole real line in the variable $\omega$. However, the integrand is an implicit function of $\omega$ and it becomes convenient to express the integral (\ref{heat-flux}) in terms of $\theta$. This means that the integration contour is a curve in the $\theta$ complex plane, given that there is no real value of $\theta$ that satisfies $|\omega|> 2 \omega_0$. Expressing $\theta$ as $\varphi+i\xi$, we have that
\begin{equation}
    \omega=2\omega_0\left[\sin\left(\tfrac{\varphi}{2}\right)\cosh\left(\tfrac{\xi}{2}\right)+i\cos\left(\tfrac{\varphi}{2}\right)\sinh\left(\tfrac{\xi}{2}\right)\right]. \label{omega-theta}
\end{equation}
From this, we can see that both $\theta=-\pi\pm i\xi$, with $\xi>0$, map into the negative branch cut $(-\infty,-2\omega_0)$. Equivalently, the positive branch cut $(2\omega_0, +\infty)$ also has two preimages, given by $\theta=\pi\pm i\xi$, with $\xi>0$.  Therefore, there are 4 possible ways to define the integral contour in the $\theta$ complex plane and the final result must be independent of this choice. In this work, we choose the contour $\mathcal C$,  shown in Fig.~\ref{countour}.

In terms of $\theta$, the expression for the heat flux becomes
\begin{widetext}
\begin{align} \label{JE}
J_E=&\,\frac{4\hbar\omega_0^2\alpha_L\alpha_R}{\pi}\int_{\mathcal C}d\theta\,\frac{\sin\tfrac{\theta}{2}\cos^3\tfrac{\theta}{2}\left[\coth\left(\frac{\hbar\omega_0}{k_BT_R}\sin(\tfrac{\theta}{2})\right)-\coth\left(\frac{\hbar\omega_0}{k_BT_L}\sin(\tfrac{\theta}{2})\right)\right]}{\left[ \sin( N \theta) + 4\alpha_L \alpha_R \sin((N-1)\theta) \right]^2+4(\alpha_L+\alpha_R)^2\cos^2\left(\left(N-\tfrac{1}{2}\right) \theta\right)}\,.
\end{align}
\end{widetext}

It is not hard to show that Eq. (\ref{JE}) can be expressed either as the sum over all the integrand residues inside the contour $\mathcal C$, or broken up into two real integrals. Here it is convenient to introduce the Debye temperature $T_D:=2\hbar\omega_0/k_B$. This definition differs slightly from the one in textbooks, since we are neglecting the contribution from transverse phonons. 

Notice that the full expression for the heat flux (\ref{JE}) is not proportional to the temperature difference $\Delta T=T_R-T_L$, but it is instead proportional to the difference between the Bose-Einstein distributions of each reservoir. Nevertheless, when the temperature difference $\Delta T$ is much smaller then the average temperature $T=\frac{1}{2}(T_L+T_R)$ and the Debye temperature $T_D$, the heat current~(\ref{JE}) becomes approximately proportional to $\Delta T$, what allows us to introduce the one-dimensional {\it thermal conductance} as
\begin{widetext}
\begin{align} \label{conductance}
\mathcal K=\,\frac{k_B^2T_D}{2\pi\hbar}\frac{\alpha_L\alpha_R }{ \Theta^2}&\int_{\mathcal C}d\theta\,\frac{\sin^2\tfrac{\theta}{2} \cos^3\tfrac{\theta}{2}\,\text{csch}^2\Big[\frac{1}{2\Theta}\sin(\tfrac{\theta}{2})\Big]}{\left[\sin( N \theta) + 4\alpha_L \alpha_R\sin((N-1)\theta) \right]^2+4(\alpha_L+\alpha_R)^2\cos^2\left(\left(N-\tfrac{1}{2}\right) \theta\right)}\,,
\end{align}
\end{widetext}
Here, we have introduced the reduced temperature scale $\Theta=T/T_D$, which describes the system in the quantum regime when $\Theta\ll 1$ or classical limit $\Theta\sim 1$.
In order to study the heat conductance without specific considerations about the materials, it is useful to define dimensionless heat conductance $\widetilde{\mathcal{K}}=\pi\hbar\mathcal{K}/k_B^2T_D$ and express (\ref{conductance}) as the sum of two real integrals as
\begin{widetext}  
\begin{align}
\widetilde{\mathcal{K}}=\frac{\alpha_L \alpha_R}{\Theta^2}&\left[\int\limits_{0}^\pi d\varphi\,\frac{\sin^2\tfrac{\varphi}{2}\cos^3\tfrac{\varphi}{2}\,\text{csch}^2\Big[\frac{1}{2\Theta}\sin(\tfrac{\varphi}{2})\Big]}{\left[\sin( N \varphi) + 4\alpha_L \alpha_R\sin((N-1)\varphi) \right]^2+4(\alpha_L+\alpha_R)^2\cos^2\left(\left(N-\tfrac{1}{2}\right) \varphi\right)}\nonumber \right.
   \\
    &\left.-\int\limits_{0}^\infty d\xi\,\frac{\sinh^3\tfrac{\xi}{2}\cosh^2\tfrac{\xi}{2}\,\text{csch}^2\Big[\frac{1}{2\Theta}\cosh(\tfrac{\xi}{2})\Big]}{\left[\sinh (N \xi) - 4\alpha_L \alpha_R\sinh((N-1)\xi) \right]^2-4(\alpha_L+\alpha_R)^2\sinh^2\left(\left(N-\tfrac{1}{2}\right) \xi\right)}\right] \,.\label{conductance3}
\end{align}
\end{widetext}

As one can see from Fig. \ref{HeatCond}, the thermal conductance of this model saturates for large $N$, which refers to a ballistic transport. This was somewhat expected, since for large $N$ the normal mode frequencies approach the real axis, which a characteristic of ballistic transport. {\it Hence, we obtained through the analytic expression (\ref{conductance}) that the Fourier law cannot be achieved for a harmonic chain without disorder}. This corroborates a well-known result in the literature of $1$D systems (see, for example, \cite{Lebowitz}, and references therein) which states that, due to the existence of several conserved quantities, integrable systems always present ballistic heat transport. The so-called normal transport, $J_{E}\sim N^{-1}$, results from the non-conservation of linear momentum due either to a one-body pinning potential or inelastic (dissipative) effects. Here we should remark that the dissipative motion of the endpoints of our harmonic chain affects at most its boundary effects.
\begin{figure}[h!]
\centering
\includegraphics[scale=0.8]{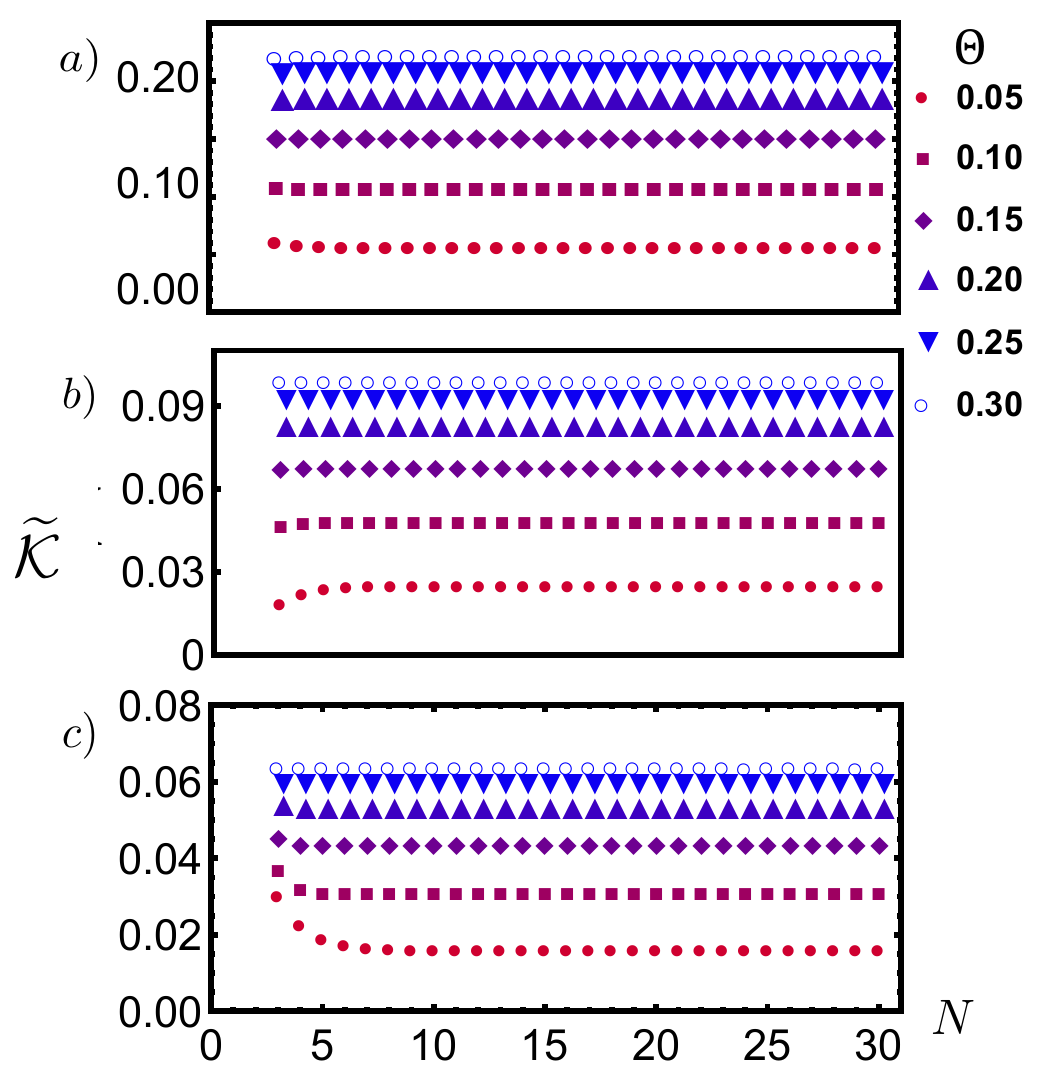}
\caption{Heat conductance given by equation (\ref{conductance3}) as a function of $N$ for different temperatures and damping parameters, i.e., a) $\alpha_L=\alpha_R=0.2$ , b) $\alpha_L=0.2$ and $\alpha_R=5$ and c) $\alpha_L=\alpha_R=5$.}
\label{HeatCond}
\end{figure}

\section{conclusions}

In this contribution we have analyzed the heat transport between two thermal reservoirs kept at different temperatures via a one-dimensional harmonic chain whose endpoints are immersed in each reservoir.   We have  modeled the latter by Brownian particles with different damping constants. 

Making use of the Z-transform we were able to  provide an exact solution to the problem which is valid for any temperature $T$ of each  reservoir, and number  $N$ of particles in the chain.  This solution is expressed as an ordinary  integral  which depends on the  thermal occupation number  of the harmonic modes coupled to the reservoirs and their frequency distribution.

Numerical solutions of this integral for  a small temperature drop between the two reservoirs have shown us that the energy transport is ballistic for large $N$ at any typical temperature scale  of the reservoirs. In other words, the thermal conductance turns out to be independent of $N$. Actually, this signature persists even for small $N$ and high temperatures ($T \gtrsim T_D$).  The thermal conductance only presents a  very modest dependence on  $N$ in the extreme quantum regime ($T<<T_D$) for $N \lesssim 10$, and, particularly, for the overdamped case when the motion of the endpoints of the chain is heavily damped. 

In the absence of the thermal reservoirs, energy is transferred from one end of the chain to the other  through the channels provided by its normal modes, namely, its longitudinal phonons. For example, an energy pulse is transmitted through a coherent superposition of  phonons with different  frequencies.  When the chain is coupled to the reservoirs, these normal modes acquire imaginary parts which are reminiscent of the existence of the damped particles at the endpoints of the chain. No matter what the real part of these frequencies are (they range roughly from  $2\omega_0$ to $\pi\omega_0/N$), their imaginary parts are  $\mathcal{O}(1/N)$.  Consequently, for  macroscopic chains (very large $N$), phonons are undamped and the energy transport is ballistic. The dependence of the thermal conductance on temperature and damping parameters can be easily understood as we analyze the transfer of energy from one reservoir to the other.

Suppose a few quanta of energy leave the left reservoir which is supposed to be at higher temperature than the right reservoir. This is firstly accomplished  by depopulating the equilibrium distribution of the particle at the left endpoint of the chain. Subsequently, by the reasoning we have just employed above, they are ballistically transported to the right reservoir and tend to overpopulate the equilibrium distribution of the particle at the right endpoint of the chain. Since both endpoints are connected to thermal reservoirs, this depopulation (overpopulation) of the left (right) reservoir is compensated by the absortion (emission) of the missing (excess) energy from (to) the left (right) reservoir. These effects involving the equilibrium distribution of energy of the particles at the endpoints of the chain are the only ones responsible for the temperature and damping dependence of the thermal conductance.

The argument we have used above is particularly valid for $N\gg1$ because, in this case, we can assume that the equilibrium distribution of the particles at the endpoints of the chain are very weakly affected by its remaining components. In other words, the dependence  we have discussed is mainly a boundary effect. However, this ceases to be true as we reduce the number of particles of the chain because now the few eigenfrequencies of the system can be quite different from the natural frequencies of oscillation of the endpoint particles. This explains that modest dependence of the thermal conductance on $N$ which shows up only for $N\lesssim 10$ and low temperatures.

Once this is understood we can safely claim that this model, although very useful and interesting to be studied, cannot account for the well-established Fourier law for thermal transport in macroscopic systems. In order to achieve it, we need to modify the present model by introducing either disorder or inelastic effects to destroy the ballistic character of the transport along the chain. In analogy with electric transport through metallic  wires, the former (latter) choice would be equivalent to the presence of elastic (inelastic) resistance along the wire.

 Finally, it would be desirable to compare our results with those measured in experiments of heat transport in nanoscopic systems. A very good candidate to be studied would be, for example, gold single-chain nanowires \cite{nano}, which present a geometrical structure very similar to our theoretical model. Nevertheless, it is quite well-known that thermal transport in these systems are mainly due to electrons, contrary to our present analysis of phonon transport. Even if one is careful enough to shield the system from the presence of electric fields, they will be naturally generated in metallic systems through thermoelectric effects. Therefore, for a faithful comparison with our findings, we need to either find an appropriate device of  insulating material or rely on the ingenuity of experimentalists  to properly select the effects of phonon transport in tiny metallic chains. Moreover, to the best of our knowledge, these experiments are mostly performed with the set up lying on a substrate, which cannot be represented by our model. Instead, these are more suitable for being modelled by a chain of damped oscillators, a natural extension of the present work that we plan to pursue in the near future.

{\bf Acknowledgements.} Financial support for this work was provided by the Brazilian funding agencies, CAPES and CNPq, under the projects numbers 140553/2018-5 and 302420/2015-0, respectively. GMM
thanks Funda\c c\~ao de Amparo \`a Pesquisa do Estado de
S\~ao Paulo (FAPESP) for financial support under grant 2016/13517-0.



\onecolumngrid
\clearpage

\section*{Supplementary Material for: ``Heat Transport in Mesoscopic Chains''}

\setcounter{section}{0}

 \section{Normal Modes}
 
 As stated in the main text, the normal mode dispersion is obtained by the zeros of $\omega D(\omega)/B(\omega)$. The zero-mode $\omega=0$ corresponds to a rigid translation of the chain and do not contribute to the heat transport. Here, we will focus on the $2N-1$ zeros of 
 \begin{equation}
    \frac{D(\omega(\theta))}{B(\omega(\theta))}=-\sec\left(\tfrac{\theta}{2}\right)\left[\sin(N\theta)+4\alpha_L\alpha_R\sin((N-1)\theta)+2i(\alpha_L+\alpha_R)\cos\left(\left(N-\tfrac{1}{2}\right)\theta\right)\right]\,.  \label{modes}
 \end{equation}
Since Eq.~(\ref{modes}) is an explicit function of $\theta$, it is convenient to work directly in the $\theta$-complex plane. Because $\sin\theta$ is a periodic function, its inverse is a multi-valued function and we have to specify the $\theta$ domain in which the function is one-to-one. From Eq.~(\ref{omega-theta}) of the main text, that is,
\begin{equation*}
    \omega(\theta)=2\omega_0\left[\sin\left(\tfrac{\varphi}{2}\right)\cosh\left(\tfrac{\xi}{2}\right)+i\cos\left(\tfrac{\varphi}{2}\right)\sinh\left(\tfrac{\xi}{2}\right)\right],
\end{equation*}
we recover almost the whole $\omega$ complex plane, with the exception of the branch cuts $(-\infty,-2\omega_0]$ and $[2\omega_0, +\infty)$, for  $\varphi\in(-\pi,\pi)$ and $\xi\in \mathbb R$. 
We can see that both $\theta=-\pi\pm i\xi$, with $\xi\geq0$, map into the negative branch cut $(-\infty,-2\omega_0]$ and, equivalently, the positive branch cut $[2\omega_0, +\infty)$ also have two preimages, given by $\theta=\pi\pm i\xi$, with $\xi\geq0$. Therefore, we can restrict our analysis to the complex strip $\varphi\in[-\pi,\pi]$ and $\xi\in \mathbb R$, keeping in mind that each branch cut has two preimages in this domain.

To show that all the normal modes are damped (with the exception of the zero-mode), we need to prove that all zeros of $D(\omega)/B(\omega)$ lie on the $\omega$ lower half-plane or, in other words, on the domain $-\pi<\varphi<\pi$ and $\xi<0$.

\begin{thm}
All the zeros of $D(\omega(\theta))/B(\omega(\theta))$, defined in Eq.~(\ref{modes}) lie on the lower semi-strip:  $(-\pi,\pi)\times\mathbb R_-^*$. \label{thm1}
\end{thm}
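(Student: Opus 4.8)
The plan is to abandon the transcendental form~(\ref{modes}) in favour of the underlying quadratic eigenvalue problem, where a one-line energy estimate does essentially all the work. Setting the fluctuating forces to zero in Eqs.~(\ref{X1-Fourier}-\ref{XN-Fourier}) and writing the homogeneous system in matrix form, the normal-mode frequencies are exactly those $\omega$ for which
\[
\left(-\omega^2 M - i\omega\,\Gamma + K\right)x = 0
\]
admits a nonzero vector $x=(\tilde X_1,\dots,\tilde X_N)$, where $M = mI\succ 0$, the damping matrix $\Gamma=\mathrm{diag}(\eta_L,0,\dots,0,\eta_R)\succeq 0$, and the chain stiffness matrix $K$ satisfies $\langle x,Kx\rangle = m\omega_0^2\sum_{j=1}^{N-1}|\tilde X_{j+1}-\tilde X_j|^2\geq 0$ with $\ker K=\mathrm{span}\{(1,\dots,1)\}$. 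The determinant of this pencil coincides, up to a nonzero constant, with $\omega D(\omega)/B(\omega)$, so its $2N-1$ nonzero roots are precisely the solutions of~(\ref{modes}).

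First I would pair the eigenvalue equation with $x$ in the Hermitian inner product. Because $M,\Gamma,K$ are real symmetric, the three quadratic forms $a:=\langle x,Mx\rangle>0$, $b:=\langle x,\Gamma x\rangle\geq 0$ and $c:=\langle x,Kx\rangle\geq 0$ are real, and $\omega$ must solve the scalar quadratic $a\omega^2 + i b\omega - c = 0$. Separating real and imaginary parts yields $\mathrm{Re}(\omega)\,(2a\,\mathrm{Im}(\omega)+b)=0$; a short case analysis of the two resulting branches, together with the explicit roots $\omega=(-ib\pm\sqrt{4ac-b^2})/2a$, then shows $\mathrm{Im}(\omega)\leq 0$ in every case, with equality forced only when $b=0$ or $\omega=0$.

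It remains to eliminate those two degenerate situations. The case $\omega=0$ is exactly the excluded rigid-translation zero mode (recognised through $c=0$). For $b=0$ I would invoke positivity of $\Gamma$: since $b=\eta_L|\tilde X_1|^2+\eta_R|\tilde X_N|^2=0$ forces $\tilde X_1=\tilde X_N=0$, the boundary equation~(\ref{X1-Fourier}) collapses to $\omega_0^2\tilde X_2=0$, after which the bulk three-term recursion~(\ref{Xj-Fourier}) propagates $\tilde X_j=0$ all along the chain, contradicting $x\neq 0$. Hence each of the $2N-1$ nontrivial modes satisfies $\mathrm{Im}(\omega)<0$ strictly.

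Finally I would translate back to $\theta$. From~(\ref{omega-theta}) one has $\mathrm{Im}(\omega)=2\omega_0\cos(\tfrac{\varphi}{2})\sinh(\tfrac{\xi}{2})$, and on the strip $\varphi\in(-\pi,\pi)$ the factor $\cos(\tfrac{\varphi}{2})>0$, so $\mathrm{Im}(\omega)<0$ is equivalent to $\xi<0$; the same formula shows $\varphi=\pm\pi$ would make $\omega$ real, which is already excluded. As $\theta\mapsto 2\omega_0\sin(\tfrac{\theta}{2})$ is injective on this strip, the roots fall bijectively into $(-\pi,\pi)\times\mathbb{R}_-^*$, which is the assertion. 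I expect the main obstacle to be not any individual estimate but the bookkeeping at the two ends of the argument: verifying that the pencil determinant genuinely reproduces $\omega D(\omega)/B(\omega)$ with no spurious or missing roots, so that the count $2N-1$ is exact, and handling the equality cases $b=0$ and $\omega=0$ cleanly enough that no physical mode is inadvertently discarded.
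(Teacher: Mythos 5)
Your proof is correct, and it takes a genuinely different route from the paper's. The paper never leaves the $\theta$-plane: it rules out zeros on the loci mapping to real $\omega$ (the segment $\xi=0$ and the branch-cut lines $\varphi=\pm\pi$) by direct evaluation of Eq.~(\ref{modes}), and then, for $\xi>0$, shows term by term that $\Im\left[\sin\left(\left(N-\tfrac{1}{2}\right)\theta^*\right)D(\omega)\right]>0$ --- a hands-on trigonometric estimate. You instead go back to the quadratic pencil $-\omega^2M-i\omega\Gamma+K$, pair it with the eigenvector, and read $\Im\,\omega\leq 0$ off the scalar quadratic $a\omega^2+ib\omega-c=0$ with $a>0$, $b,c\geq 0$; equality is excluded by identifying $c=0$ with the discarded zero mode and by the unique-continuation sweep along the chain when $b=0$, and the conclusion is pulled back to the strip via $\Im\,\omega=2\omega_0\cos\left(\tfrac{\varphi}{2}\right)\sinh\left(\tfrac{\xi}{2}\right)$. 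Your dissipativity argument is more elementary and more general: it uses only $M\succ 0$, $\Gamma\succeq 0$, $K\succeq 0$ plus the connectivity of the chain, so it survives mass or spring disorder and damping at arbitrary sites, and it treats the boundary lines $\varphi=\pm\pi$ and the real segment uniformly, since both map to real $\omega$ and you exclude all real nonzero eigenvalues in one stroke. What the paper's explicit computation buys in exchange is that the same trigonometric quantities reappear immediately afterwards: they are what sharpen, for $\alpha_L=\alpha_R$, into the explicit localization region (\ref{region}) of the second theorem of the supplement, which an abstract pencil argument cannot produce. The two bookkeeping points you flag are real but benign: the identification of $\omega D(\omega)/B(\omega)$ with the characteristic polynomial of the pencil is precisely the statement made in the main text when the normal modes are introduced, and your $b=0$ step uses $\eta_L,\eta_R>0$, which is the model's standing assumption.
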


\begin{proof} Let us first show that the expression $(\ref{modes})$ has no zeros for $-\pi\leq\varphi\leq\pi$ and $\xi=0$. Plugging $\theta=\varphi$ into Eq.~(\ref{modes}), we obtain that
\begin{equation}
    \frac{D(\omega(\varphi))}{B(\omega(\varphi))}=-\left[\sin((N-\tfrac{1}{2})\varphi)(1+4\alpha_L\alpha_R)+\frac{\cos\left(\left(N-\tfrac{1}{2}\right)\varphi\right)}{\cos\tfrac{\varphi}{2}}(1-4\alpha_L\alpha_R)\right]+i\left[2\,(\alpha_L+\alpha_R)\frac{\cos\left(\left(N-\tfrac{1}{2}\right)\varphi\right)}{\cos\tfrac{\varphi}{2}}\right]\neq0\,. \label{real-line}
\end{equation}
Although both $D(\omega)$ and $B(\omega)$ vanish for $\varphi=\pm \pi$, the function $D(\omega)/B(\omega)$ is regular for $\varphi=\pm \pi$, since
\[
\cos\left((2N-1)\frac{\varphi}{2}\right)=\sum_{n=0}^{N-1}\begin{pmatrix}
  2N-1 \\
 2n \\
\end{pmatrix}\cos^{2(N-n)-1}\left(\tfrac{\varphi}{2}\right)\sin^{2n}\left(\tfrac{\varphi}{2}\right)\,.
\]

It is also straightforward to show that Eq.~(\ref{modes}) has no roots for $\theta=\pm\pi+i\xi$, with $\xi\neq0$, since
\begin{equation}
     \frac{D\big(\omega(\pm\pi+i\xi)\big)}{B\big(\omega(\pm\pi+i\xi)\big)}=\pm i\frac{(-1)^{N+1}}{\sinh\left(\tfrac{\xi}{2}\right)}\bigg[\mp\,2\,(\alpha_L+\alpha_R)\sinh\left(\left(N-\tfrac{1}{2}\right)\xi\right)+i\,\Big(\sinh(N\xi)-4\alpha_L\alpha_R\sinh((N-1)\xi)\Big)\bigg]\neq0\,. \label{branch-cuts}
\end{equation}
As a matter of fact, Eqs.~(\ref{real-line}, \ref{branch-cuts}) demonstrate that there are no zeros of Eq.~(\ref{modes}) for real values of $\omega$. 

Let us now show that there are no zeros of Eq.~(\ref{modes}) for $\varphi\in(-\pi,\pi)$ and $\xi>0$. Since $\cos\tfrac{\theta}{2}$ has no poles or zeros in this domain, all the roots of $D(\omega)/B(\omega)$ will correspond to zeros of $D(\omega)$. Moreover, because $\cos\left(\left(N-\tfrac{1}{2}\right)\theta^*\right)$ does not vanish for $\xi\neq0$, the zeros of $\cos\left(\left(N-\tfrac{1}{2}\right)\theta
^*\right)D(\omega)$ for $\xi\neq0$ correspond to the roots of $D(\omega)$. Using that
\begin{align}
    \sin\left(\left(N-\tfrac{1}{2}\right)\theta^*\right)D(\omega)=&\;\frac{1}{2}\left[\sin\tfrac{\varphi}{2}\cosh\left(\left(2N-\tfrac{1}{2}\right)\xi\right)+\cosh\tfrac{\xi}{2}\sin\left(\left(2N-\tfrac{1}{2}\right)\varphi\right)+4\alpha_L\alpha_R\left(\cosh\tfrac{\xi}{2}\sin\left(\left(2N-\tfrac{3}{2}\right)\varphi\right)\right.\right.\nonumber
    \\
    &\left.-\,\sin\tfrac{\varphi}{2}\cosh\left(\left(2N-\tfrac{3}{2}\right)\xi\right)\Big)\right]+\frac{i}{2}\left[\cos\tfrac{\varphi}{2}\sinh\left(\left(2N-\tfrac{1}{2}\right)\xi\right)+\sinh\tfrac{\xi}{2}\cos\left(\left(2N-\tfrac{1}{2}\right)\varphi\right)\right.\nonumber
    \\
    &+4\alpha_L\alpha_R\left(\cos\tfrac{\varphi}{2}\sinh\left(\left(2N-\tfrac{3}{2}\right)\xi\right)-\sinh\tfrac{\xi}{2}\cos\left(\left(2N-\tfrac{3}{2}\right)\varphi\right)\right)+2\,(\alpha_L+\alpha_R)\nonumber
    \\
    &\times\Big(\cosh\left(\left(2N-1\right)\xi\right)+\cos\left(\left(2N-1\right)\varphi\right)\Big)\bigg]\,.
\end{align}
and considering each term of the the imaginary part separately, we obtain that
\begin{align}
    &\cos\tfrac{\varphi}{2}\sinh\left(\left(2N-\tfrac{1}{2}\right)\xi\right)+\sinh\tfrac{\xi}{2}\cos\left(\left(2N-\tfrac{1}{2}\right)\varphi\right)>\sinh\tfrac{\xi}{2}\left[\left(4N-1\right)\cos\tfrac{\varphi}{2}+\cos\left(\left(2N-\tfrac{1}{2}\right)\varphi\right)\right]>0\,,
\\
   &\cos\tfrac{\varphi}{2}\sinh\left(\left(2N-\tfrac{3}{2}\right)\xi\right)-\sinh\tfrac{\xi}{2}\cos\left(\left(2N-\tfrac{3}{2}\right)\varphi\right)>\sinh\tfrac{\xi}{2}\left[\left(4N-3\right)\cos\tfrac{\varphi}{2}-\cos\left(\left(2N-\tfrac{3}{2}\right)\varphi\right)\right]>0\,,
\\
    &2\,(\alpha_L+\alpha_R)\left(\cosh\left(\left(2N-1\right)\xi\right)+\cos\left(\left(2N-1\right)\varphi\right)\right)>2\,(\alpha_L+\alpha_R)\left(1+\cos\left(\left(2N-1\right)\varphi\right)\right)>0\,,
\end{align}
for $\xi>0$. This implies that 
\begin{equation}
    \Im\left[\sin\left(\left(N-\tfrac{1}{2}\right)\theta^*\right)D(\omega)\right]>0\qquad \text{for}\qquad \xi>0\,, 
\end{equation}
and ends our proof.
\end{proof}

From the self-adjoint condition and analyticity of the operators $X_j(\omega)$, we argued in the main text that the phonon frequencies which do not lie on the imaginary axis must come in pairs. To see that, let us expand $D(\omega)$ in its real and imaginary parts, that is,
 \begin{align}
     D(\omega)=&\,\bigg[\sin(N\varphi)\cosh(N\xi)+4\alpha_L\alpha_R\sin\left((N-1)\varphi\right)\cosh\left((N-1)\xi\right)+2\,(\alpha_L+\alpha_R)\sin\left((N-\tfrac{1}{2})\varphi\right)\sinh\left((N-\tfrac{1}{2})\xi\right)\bigg] \nonumber
     \\
     +\,i&\bigg[\cos(N\varphi)\sinh(N\xi)+4\alpha_L\alpha_R\cos\left((N-1)\varphi\right)\sinh\left((N-1)\xi\right)+2\,(\alpha_L+\alpha_R)\cos\left((N-\tfrac{1}{2})\varphi\right)\cosh\left((N-\tfrac{1}{2})\xi\right)\bigg].\label{modes2}
 \end{align}
It is straightforward to see that, if $\theta_n=\varphi_n+i\xi_n$ is a root of Eq.~(\ref{modes2}), so is $-\theta_n^*=-\varphi_n+i\xi_n$. Because these roots come in pairs and $D(\omega)$ has $2N-1$ zeros for $\xi<0$, there must exist at least one root on the negative imaginary axis. The normal mode frequencies with $\varphi=0$ are solutions of
\begin{equation}
    \sinh(N\xi)+4\alpha_L\alpha_R\sinh\left((N-1)\xi\right)+2\,(\alpha_L+\alpha_R)\cosh\left((N-\tfrac{1}{2})\xi\right)=0\,. \label{roots-Im}
\end{equation}

Here, it is convenient to define the following variables
\begin{equation}
    \min(\alpha_L,\alpha_R):=\alpha\,,
    \qquad\qquad\qquad \max(\alpha_L,\alpha_R):=\nu\alpha\,,
\end{equation}
where $\nu\geq1$. Plugging these into Eq.~(\ref{roots-Im}), we obtain
\begin{equation}
     \sinh(N\xi)+4\nu\alpha^2\sinh\left((N-1)\xi\right)+2\alpha(1+\nu)\cosh\left((N-\tfrac{1}{2})\xi\right)=0\,, \label{roots-Im2}
\end{equation}
which can be viewed as a quadratic polynomial in $\alpha$ and each solution of this quadratic equation give us one transcendental equation in $\xi$, that is,
 \begin{align}
     \alpha&=\frac{1}{2\nu\sinh\left((N-1)\xi)\right)}\left[-\frac{1+\nu}{2}\cosh\left(\left(N-\tfrac{1}{2}\right)\xi\right)+\sqrt{\nu\cosh^2\tfrac{\xi}{2}+\tfrac{1}{4}(\nu-1)^2\cosh^2\left(\left(N-\tfrac{1}{2}\right)\xi\right)}\right], \label{alpha1}
     \\
     \alpha&=\frac{1}{2\nu\sinh\left((N-1)\xi)\right)}\left[-\frac{1+\nu}{2}\cosh\left(\left(N-\tfrac{1}{2}\right)\xi\right)-\sqrt{\nu\cosh^2\tfrac{\xi}{2}+\tfrac{1}{4}(\nu-1)^2\cosh^2\left(\left(N-\tfrac{1}{2}\right)\xi\right)}\right]. \label{alpha2}
 \end{align}
 
 Because the right hand side of Eq.~(\ref{alpha1}) is a monotonically decreasing function, this transcendental equation always admits one solution. On the other hand, the right hand side of Eq.~(\ref{alpha2}) has a global minimum for $\xi<0$, the smallest value of $\alpha$ so that any solution exists. To indicate that, let us analyze the behavior of both functions for $\xi\rightarrow-\infty$ and $\xi\rightarrow0^-$, that is,
 \begin{align}
     &\lim_{\xi\rightarrow-\infty}\left[-\frac{1+\nu}{4\nu}\frac{\cosh\left(\left(N-\tfrac{1}{2}\right)\xi\right)}{\sinh\left((N-1)\xi)\right)}+\frac{\sqrt{\nu\cosh^2\tfrac{\xi}{2}+\tfrac{1}{4}(\nu-1)^2\cosh^2\left(\left(N-\tfrac{1}{2}\right)\xi\right)}}{2\nu\sinh\left((N-1)\xi)\right)}\right]=\frac{e^{-\frac{\xi}{2}}}{2\nu},
     \\
      &\lim_{\xi\rightarrow0^-}\left[-\frac{1+\nu}{4\nu}\frac{\cosh\left(\left(N-\tfrac{1}{2}\right)\xi\right)}{\sinh\left((N-1)\xi)\right)}+\frac{\sqrt{\nu\cosh^2\tfrac{\xi}{2}+\tfrac{1}{4}(\nu-1)^2\cosh^2\left(\left(N-\tfrac{1}{2}\right)\xi\right)}}{2\nu\sinh\left((N-1)\xi)\right)}\right]=0,
     \\
      &\lim_{\xi\rightarrow-\infty}\left[-\frac{1+\nu}{4\nu}\frac{\cosh\left(\left(N-\tfrac{1}{2}\right)\xi\right)}{\sinh\left((N-1)\xi)\right)}-\frac{\sqrt{\nu\cosh^2\tfrac{\xi}{2}+\tfrac{1}{4}(\nu-1)^2\cosh^2\left(\left(N-\tfrac{1}{2}\right)\xi\right)}}{2\nu\sinh\left((N-1)\xi)\right)}\right]=\frac{e^{-\frac{\xi}{2}}}{2},
     \\
      &\lim_{\xi\rightarrow0^-}\left[-\frac{1+\nu}{4\nu}\frac{\cosh\left(\left(N-\tfrac{1}{2}\right)\xi\right)}{\sinh\left((N-1)\xi)\right)}-\frac{\sqrt{\nu\cosh^2\tfrac{\xi}{2}+\tfrac{1}{4}(\nu-1)^2\cosh^2\left(\left(N-\tfrac{1}{2}\right)\xi\right)}}{2\nu\sinh\left((N-1)\xi)\right)}\right]=-\frac{1+\nu}{2\nu(N-1)\xi}.
 \end{align}
 
  Let us now turn our attention to the value of $\alpha$, such that, the system transitions from the underdamped case (only one root in the imaginary axis) to the overdamped one (3 roots in the imaginary axis). The right hand side of Eq.~(\ref{alpha2}) has one minimum for $\xi<0$. This minimum defines the critical value $\alpha_c$ in which the system posses one root with multiplicity 2 in the imaginary axis. In fact, $\alpha_c$ cannot be expressed in some explicit formula, however, from the inequality
 \begin{align}
     \frac{e^{-\tfrac{1}{2}\xi}}{2}<-\frac{1+\nu}{4\nu}\frac{\cosh\left(\left(N-\tfrac{1}{2}\right)\xi\right)}{\sinh\left((N-1)\xi)\right)}-\frac{\sqrt{\nu\cosh^2\tfrac{\xi}{2}+\tfrac{1}{4}(\nu-1)^2\cosh^2\left(\left(N-\tfrac{1}{2}\right)\xi\right)}}{2\nu\sinh\left((N-1)\xi)\right)}< \frac{e^{-\tfrac{1}{2}\xi}}{2}\left[1-\frac{1+\nu^{-1}}{(N-1)\xi}\right],
 \end{align}
 we can show that
 \begin{equation}
     \frac{1}{2}<\alpha_c<\frac{1}{2}\exp\left[\frac{1+\nu}{4\nu(N-1)}\left(\sqrt{1+\frac{8\nu(N-1)}{1+\nu}}-1\right)\right]\frac{\sqrt{1+\frac{8\nu(N-1)}{1+\nu}}+1}{\sqrt{1+\frac{8\nu(N-1)}{1+\nu}}-1}\,.
 \end{equation}
 
\begin{figure}
\centering
\includegraphics[scale=0.8]{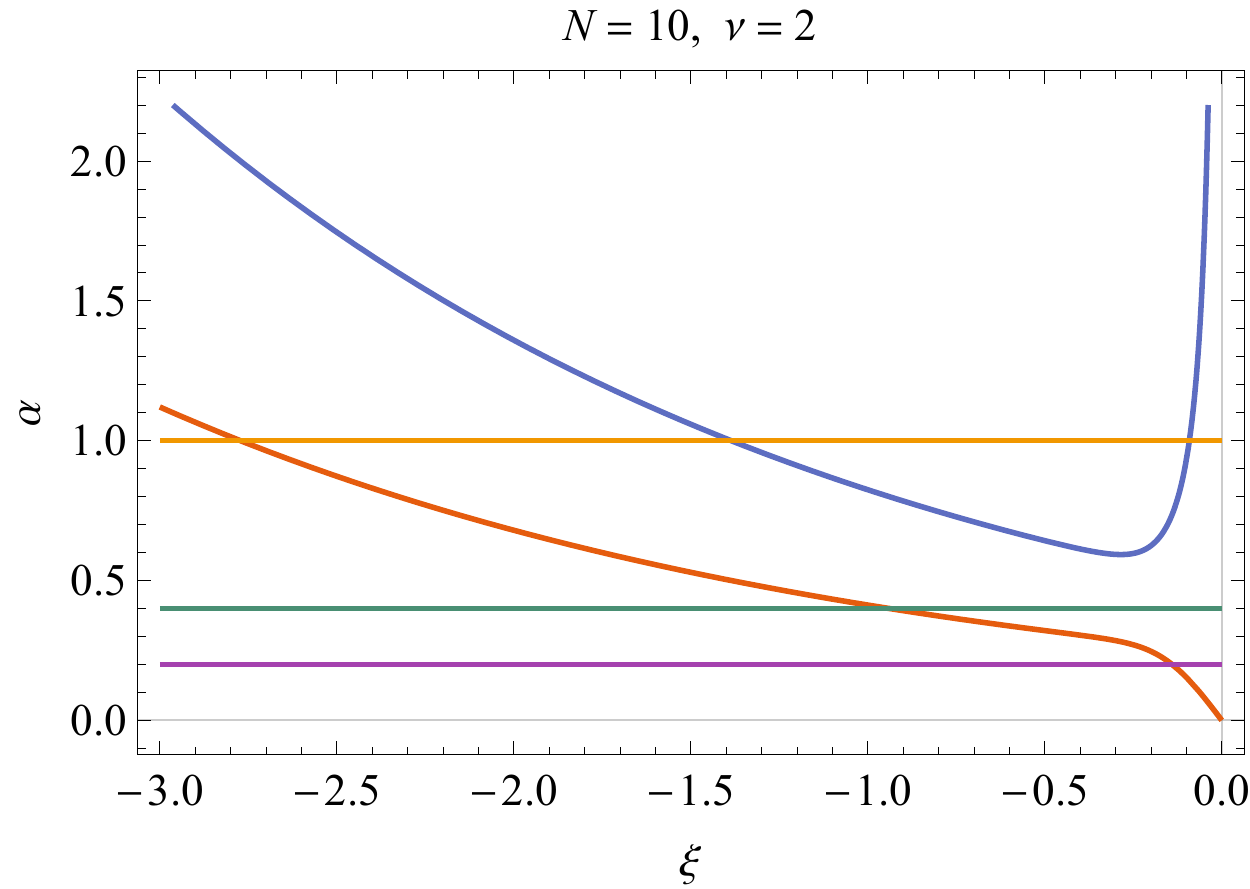}
\caption{Nomal mode frequencies on the imaginary axis for $\nu=2$ and $N=10$.}
\label{Normal mode2}
\end{figure}

 \subsection{Symmetric case $(\alpha_L=\alpha_R)$}
 
 We have shown in the main text that, for the particular case of $\alpha_L=\alpha_R=\alpha$,  Eq.~(\ref{D-alpha}) can be broken up into the two complex transcendental equations (\ref{alpha-tan}, \ref{alpha-cot}). Reexpressing Eq.~(\ref{alpha-tan}) into its real and imaginary part, we end up with
 \begin{align}
     \frac{\cos\left(\tfrac{\varphi}{2}\right)\sinh\left(\left(2N-1\right)\tfrac{\xi}{2}-\cos\left(\left(2N-1\right)\tfrac{\varphi}{2}\right)\sinh\left(\tfrac{\xi}{2}\right)\right)}{2\left[\cos\left((N-1)\varphi\right)+\cosh\left((N-1)\xi\right)\right]}&=\alpha\,,
     \\
     \sin\left(\left(2N-1\right)\tfrac{\varphi}{2}\right)\cosh\left(\tfrac{\xi}{2}\right)-\sin\left(\tfrac{\varphi}{2}\right)\cosh\left(\left(2N-1\right)\tfrac{\xi}{2}\right)&=0\,. \label{im-alpha-tan}
 \end{align}

 Equivalently, for Eq.~(\ref{alpha-cot}), we have
 \begin{align}
     \frac{\cos\left(\tfrac{\varphi}{2}\right)\sinh\left(\left(2N-1\right)\tfrac{\xi}{2}+\cos\left(\left(2N-1\right)\tfrac{\varphi}{2}\right)\sinh\left(\tfrac{\xi}{2}\right)\right)}{2\left[\cosh\left((N-1)\xi\right)-\cos\left((N-1)\varphi\right)\right]}&=\alpha\,,
     \\
     \sin\left(\left(2N-1\right)\tfrac{\varphi}{2}\right)\cosh\left(\tfrac{\xi}{2}\right)+\sin\left(\tfrac{\varphi}{2}\right)\cosh\left(\left(2N-1\right)\tfrac{\xi}{2}\right)&=0\,. \label{im-alpha-cot}
 \end{align}
 
 This means that the phonon frequencies lie either on the curve (\ref{im-alpha-tan}) or on the curve (\ref{im-alpha-cot}). From that, we are able to determine the lower bound for the normal mode frequencies, given in Eq.~(\ref{region}).
  
\begin{thm}
For $\alpha_L=\alpha_R$, all the zeros of $D(\omega(\theta))/B(\omega(\theta))$ are contained inside the region 
\[
-\frac{1}{N-1}\ln\left(\left|\csc\tfrac{\varphi}{2}\right|+\sqrt{\csc
^2\tfrac{\varphi}{2}+1}\;\right)<\xi<0,
\]
for $\varphi\in[-\pi,\pi]$.
\end{thm}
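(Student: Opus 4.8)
The plan is to combine the upper bound already proved in Theorem~\ref{thm1} with a sharp lower bound extracted from the explicit imaginary-part conditions derived above. Since $\alpha$ is real, every normal-mode frequency must make the imaginary part of the right-hand sides of (\ref{alpha-tan}, \ref{alpha-cot}) vanish, so it lies on one of the two curves (\ref{im-alpha-tan}) or (\ref{im-alpha-cot}). Writing these with the shorthand $a=\tfrac{\varphi}{2}$, $b=\tfrac{\xi}{2}$ and $M=2N-1$, both conditions read $\sin(Ma)\cosh b=\mp\sin(a)\cosh(Mb)$, and taking absolute values collapses them into the single identity $|\sin(Ma)|\cosh b=|\sin a|\cosh(Mb)$, valid on the entire root locus. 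Theorem~\ref{thm1} already supplies $\xi<0$, so only the lower bound in the region (\ref{region}) remains.

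First I would recast the target as a clean hyperbolic inequality. Using $\ln\!\big(u+\sqrt{u^2+1}\big)=\operatorname{arcsinh} u$, the claimed bound $\xi>-\tfrac{1}{N-1}\ln\!\big(|\csc\tfrac{\varphi}{2}|+\sqrt{\csc^2\tfrac{\varphi}{2}+1}\big)$ is, because $\xi<0$, equivalent to $\sinh\big((N-1)|\xi|\big)<|\csc\tfrac{\varphi}{2}|$, that is, to $|\sin\tfrac{\varphi}{2}|\,\sinh\big((N-1)|\xi|\big)<1$. Since $M-1=2(N-1)$ gives $(N-1)|\xi|=(M-1)|b|$, the whole theorem reduces to establishing $|\sin a|\,\sinh\big((M-1)|b|\big)<1$ on the locus.

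Next I would feed the root condition into this target. From $|\sin(Ma)|\le 1$ the identity above yields $|\sin a|\le \cosh|b|/\cosh(M|b|)$ (using that $\cosh$ is even and $b<0$). Substituting this estimate, it suffices to prove the one-variable inequality $\cosh t\,\sinh\big((M-1)t\big)<\cosh(Mt)$ for all $t\ge 0$, which is the technical heart of the argument but is short: the product-to-sum formula gives $\cosh t\,\sinh((M-1)t)=\tfrac12\big[\sinh(Mt)+\sinh((M-2)t)\big]$, and since $\sinh$ is increasing one has $\sinh((M-2)t)\le\sinh(Mt)$ for $t\ge 0$, so the left side is at most $\sinh(Mt)$, which is strictly below $\cosh(Mt)$ because $\cosh(Mt)-\sinh(Mt)=e^{-Mt}>0$. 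Chaining these, $|\sin a|\,\sinh((M-1)|b|)\le \cosh|b|\,\sinh((M-1)|b|)/\cosh(M|b|)<1$, which is exactly the reduced claim.

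I expect the main difficulty to be organizational rather than deep: correctly reducing the two transcendental curve equations to the single absolute-value identity and tracking the substitutions $(\varphi,\xi)\mapsto(a,b)$ together with $(N-1)|\xi|=(M-1)|b|$, so that the final inequality matches (\ref{region}) verbatim. The crude step $|\sin(Ma)|\le 1$ is the place one might fear losing too much, but the auxiliary estimate $\cosh t\,\sinh((M-1)t)<\cosh(Mt)$ is tight enough that no refinement is needed. It is worth remarking that at $\varphi=0$ the bound degenerates, since $\csc\tfrac{\varphi}{2}\to\infty$, consistently reflecting that the purely imaginary (non-perturbative) roots are unconstrained from below.
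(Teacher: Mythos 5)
Your proof is correct and takes essentially the same route as the paper's: both confine the roots to the union of the two imaginary-part curves (collapsed into the single identity $|\sin(Ma)|\cosh b=|\sin a|\cosh(Mb)$, in your shorthand $M=2N-1$, $a=\tfrac{\varphi}{2}$, $b=\tfrac{\xi}{2}$; the paper gets the same locus by multiplying the two conditions), apply the crude bound $|\sin(Ma)|\le 1$, finish with a hyperbolic product inequality, and invoke Theorem~\ref{thm1} for $\xi<0$. The only difference is cosmetic: the paper's estimate $\cosh(Mb)\ge \cosh b\,\cosh\big((M-1)b\big)$ yields the marginally stronger bound $\cosh((N-1)\xi)<\left|\csc\tfrac{\varphi}{2}\right|$, which is then relaxed to the stated arcsinh-type region, whereas your $\sinh$ variant $\cosh b\,\sinh\big((M-1)b\big)<\cosh(Mb)$ lands on the stated region directly.
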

 
 \begin{proof}
   Multiplying Eqs.~(\ref{im-alpha-tan}) and (\ref{im-alpha-cot}), we find that the roots of $D(\omega(\theta))/B(\omega(\theta))$, when $\alpha_L=\alpha_R$ lie on the curve
  \begin{equation}
      \sin^2\left(\left(2N-1\right)\tfrac{\varphi}{2}\right)\cosh^2\left(\tfrac{\xi}{2}\right)-\sin^2\left(\tfrac{\varphi}{2}\right)\cosh^2\left(\left(2N-1\right)\tfrac{\xi}{2}\right)=0\,. \label{curve}
  \end{equation}
  Dividing this equation by $\cosh^2\left(\tfrac{\xi}{2}\right)\sin^2\left(\tfrac{\varphi}{2}\right)$, we find
  \begin{equation}
      0=\frac{\sin^2\left(\left(2N-1\right)\tfrac{\varphi}{2}\right)}{\sin^2\left(\tfrac{\varphi}{2}\right)}-\left[\cosh((N-1)\xi)+\tanh\left(\tfrac{\xi}{2}\right)\sinh((N-1)\xi)\right]^2\leq \frac{1}{\sin^2\left(\tfrac{\varphi}{2}\right)}-\cosh^2(N\xi), \label{ineq}
  \end{equation}
  If we restrict the analysis for $\xi\neq0$, we can replace the sign $\leq$ by $<$ in Eq.~(\ref{ineq}). This shows that the all the normal mode frequencies must be inside the region
  \begin{equation}
    \cosh((N-1)\xi)<\left|\csc\left(\tfrac{\varphi}{2}\right)\right|,  
  \end{equation}
  or equivalently
  \begin{equation}
      -\frac{1}{N-1}\ln\left(\left|\csc\tfrac{\varphi}{2}\right|+\sqrt{\csc
^2\tfrac{\varphi}{2}+1}\;\right)<\xi<\frac{1}{N-1}\ln\left(\left|\csc\tfrac{\varphi}{2}\right|+\sqrt{\csc
^2\tfrac{\varphi}{2}+1}\;\right)\,. \label{ineq2}
  \end{equation}
  
  \begin{figure}
\centering
\includegraphics[scale=0.64]{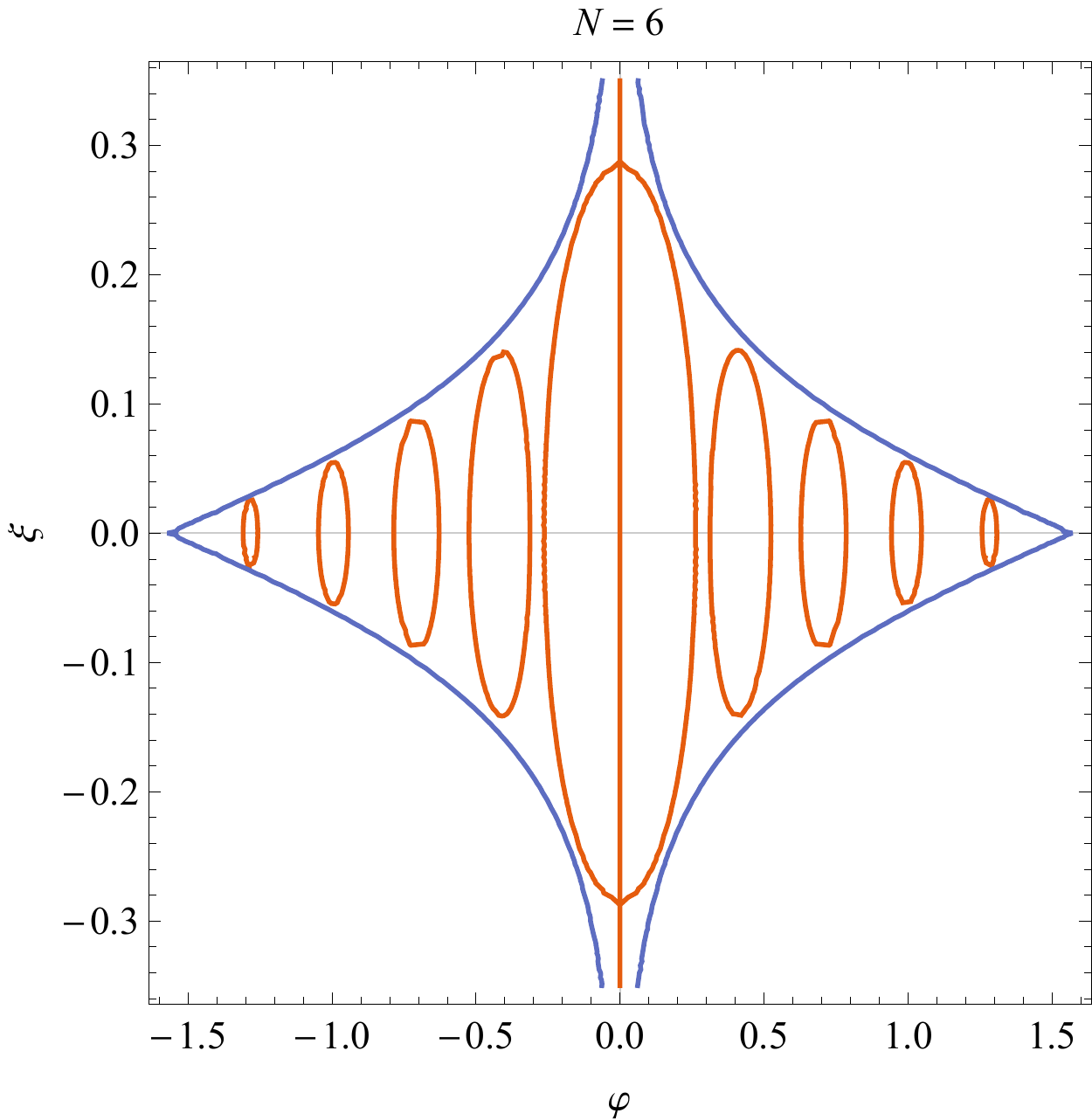}
\caption{In red is the curve from Eq.~(\ref{region}) and in blue is the curve $ \cosh((N-1)\xi)=\left|\csc\left(\tfrac{\varphi}{2}\right)\right|$.}
\label{Pole distribution}
\end{figure}
  We conclude this proof by combining Eq.~(\ref{ineq2}) with Theorem~\ref{thm1}.
 \end{proof}

\section{Heat Flux}

As shown in Eq.~(\ref{phi1}), the heat flux can be obtained solely through the asymptotic behavior of $\dot X_1(t)$ and $F_L(t)$. From Eq.~(\ref{X1-omega}), we can express $\dot X_1(t)$ as
\begin{equation}
 \dot X_1(t)=\int\limits_{-\infty}^{\infty}\frac{d\omega}{2\pi}(-i\omega)\tilde{X}_1(\omega)e^{-i\omega t}=\int\limits_{-\infty}^{\infty}\frac{d\omega}{2\pi}\int\limits_{-\infty}^{\infty}dt'\,\frac{e^{-i\omega(t-t')}}{im\omega_0D(\omega)}\left[A(\omega)F_L(t')+B(\omega)F_R(t')\right], \label{X1-dot}
\end{equation}
where $B(\omega)$, $A(\omega)$ and $D(\omega)$ are given by Eqs.~(\ref{B}-\ref{D}) respectively. Plugging Eq.~(\ref{X1-dot}) into Eq.~(\ref{phi1}), we end up with
\begin{align}
    J_E=&\lim_{t\rightarrow\infty}\int\limits_{-\infty}^{\infty}\frac{d\omega}{2\pi}\int\limits_{-\infty}^{\infty} \frac{d\omega'}{2\pi} \int\limits_{-\infty}^{\infty}dt'\int\limits_{-\infty}^{\infty}dt''\,\frac{e^{i\omega(t'-t)}\,e^{i\omega'(t''-t)}}{2m\omega_0D(\omega)D(\omega')}\left[A(\omega)\Big(2\alpha_LA(\omega')-iD(\omega')\Big)\left\langle \{\tilde F_{L}(t''),\tilde F_{L}(t')\}\right\rangle\right.\nonumber
    \\
    &+2\alpha_LB(\omega)B(\omega')\left\langle \{\tilde F_{R}(t''),\tilde F_{R}(t')\}\right\rangle\Bigg]\,. \label{J-interm}
\end{align}
Using Eq.~(\ref{Noise}), one can show that Eq.~(\ref{J-interm}) reduces to
\begin{align}
    J_E&=\int\limits_{-\infty}^{\infty} \frac{d\omega}{2\pi} \frac{\hbar\omega \alpha_L}{D(\omega)D(-\omega)} \left\{ \coth\left[\frac{\hbar\omega}{2k_BT_R}\right]  2\alpha_R B(\omega)B(-\omega)+\coth\left[\frac{\hbar\omega}{2k_BT_L}\right]A(-\omega) \left[ 2\alpha_LA(\omega)-iD(\omega) \right]  \right\},\nonumber
    \\
   \frac{ J_E}{\hbar\alpha_L\alpha_R}&=\int\limits_{-\infty}^{\infty} \frac{d\omega}{\pi} \frac{\omega B(\omega)B(-\omega)}{D(\omega)D(-\omega)} \left\{ \coth\left[\frac{\hbar\omega}{2k_BT_R}\right]   +\coth\left[\frac{\hbar\omega}{2k_BT_L}\right]\frac{4\alpha_LA(\omega)A(-\omega)-iD(\omega)A(-\omega)+iD(-\omega)A(\omega)}{4\alpha_RB(\omega)B(-\omega)}     \right\}. \label{JE-prelim}
\end{align}

Since $A(\omega)$, $B(\omega)$ and $D(\omega)$ are implicit functions of $\omega$, it is convenient to work explicitly with $\theta$. Because $\omega=2\omega_0\sin\tfrac{\theta}{2}$ is an odd function in $\theta$, we have that $\theta\rightarrow-\theta$ corresponds to $\omega\rightarrow-\omega$. Using this property, we find that
\begin{equation}
   \frac{4\alpha_LA\big(\omega)A(-\omega)-iD(\omega)A(-\omega)+iD(-\omega)A(\omega)}{4\alpha_RB(\omega)B(-\omega)}  =-2\,\frac{\cos^2\left((N-\tfrac{1}{2})\theta\right)+\sin\left(N\theta\right)\sin\left((N-1)\theta\right)}{\cos^2\tfrac{\theta}{2}}=-1,
\end{equation}
and we end up with Eq.~(\ref{heat-flux}), that is,
\begin{equation*}
    J_E= \frac{\hbar\alpha_L\alpha_R}{\pi}\int\limits_{-\infty}^{\infty}d\omega \frac{\omega B(\omega)B(-\omega)}{D(\omega)D(-\omega)} \left[ \coth\left(\frac{\hbar\omega}{2k_BT_R}\right)   -\coth\left(\frac{\hbar\omega}{2k_BT_L}\right)     \right].
\end{equation*}
\end{document}